\begin{document}

\markboth{J.A. Belinch\'on, C. Gonz\'alez, S. Dib}
{Instructions for Typing Manuscripts (Paper's Title)}

%%%%%%%%%%%%%%%%%%%%% Publisher's Area please ignore %%%%%%%%%%%%%%%
%
\catchline{}{}{}{}{}
%
%%%%%%%%%%%%%%%%%%%%%%%%%%%%%%%%%%%%%%%%%%%%%%%%%%%%%%%%%%%%%%%%%%%%

\title{Self-similar cosmological solutions in $f(R,T)$ gravity theory}

\author{Jos\'{e} Antonio Belinch\'{o}n}

\address{Dpt. Matem\'{a}ticas, Facultad de Ingenier\'{\i}a, Universidad de Atacama, Av.
Copayapu 485, Copiap\'{o}, Chile\\
\email{jose.belinchon@uda.cl }}

\author{Carlos Gonz\'alez}

\address{Dpt. F\'{\i}sica, Facultad Ciencias Naturales, Universidad de Atacama, Av.
Copayapu 485, Copiap\'{o}, Chile\\
carlos.gonzalez@uda.cl }

\author{Sami Dib}

\address{Max Planck Institute for Astronomy, K\"{o}nigstuhl 17, 69117,
Heidelberg, Germany\\
sami.dib@gmail.com}

\maketitle

\begin{history}
\received{(Day Month Year)}
\revised{(Day Month Year)}
\end{history}

\begin{abstract}
We study the $f(R,T)$ cosmological models under the self-similarity hypothesis.
We determine the exact form that each physical and geometrical quantity may
take in order that the Field Equations (FE) admit exact self-similar solutions
through the matter collineation approach. We study two models: the
case$\ f(R,T)=f_{1}(R)+f_{2}(T)$ and the case $f(R,T)=f_{1}(R)+f_{2}%
(R)f_{3}(T)$. In each case, we state general theorems which determine
completely the form of the unknown functions $f_{i}$ such that the field
equations admit self-similar solutions. We also state some corollaries as
limiting cases. These results are quite general and valid for any homogeneous self-similar
metric$.$ In this way, we are able to generate new cosmological scenarios. As
examples, we study two cases by finding exact solutions to these particular models.

\end{abstract}

\keywords{$f(R,T)$ gravity; Self-similarity; Exact solutions.}

\section{Introduction}

The physical and mathematical importance of the modified gravitational models
has recently been pointed out by several authors (see, for instance, \cite{N-O
10} \cite{Tim} \cite{Amendola}, \cite{Faraoni STT}, and \cite{NO}),
respectively, since this kind of theories explain in a better way the dynamics
of the very early universe, as well as its current acceleration. Obviously,
although such class of theories are more general than the usual
Jordan-Brans-Dicke models, they may be generalized in order to incorporate
corrections to the Ricci scalar term as the $f(R)$ models (see for example
\cite{Faraoni}).

Recently, a new theory, known as $f(R,T)$ gravity, has been formulated in
\cite{H1}. In the context of $f(R,T)$ modified theories of gravity, the
gravitational Lagrangian is given by an arbitrary function of the Ricci scalar
$R$ and of the trace of the energy-momentum tensor $T$. Hence, $f(R,T)$
gravity theories can be interpreted as alternatives of the $f(R,L_{m})$ type
gravity theories \cite{Lm}, with the gravitational action depending not on the
matter Lagrangian, but on the trace of the matter energy-momentum tensor. The
gravitational field equations in the metric formalism show that the field
equations explicitly depend on the nature of the matter source. The dependence
on $T$ may be induced by exotic imperfect fluids, or quantum effects
(conformal anomaly) \cite{H1}.

This theory has received recently a great amount of attention. For extensive
reviews of $f(R,T)$ gravity see \cite{H2,H3,H4,H5}. In the following, we
discuss some of the recent relevant works in the field. Houndjo \cite{AP1}
studied the cosmological reconstruction of $f(R,T)$ gravity describing
matter-dominated and accelerated phases. Special attention was paid to the
specific case of $f(R,T)=f_{1}(R)+f_{2}(T)$. In \cite{AP2}, the authors have
shown that the dust fluid reproduces $\Lambda$CDM cosmology. In \cite{AP3},
the authors discussed a non-equilibrium picture of thermodynamics at the
apparent horizon of an Friedmann-Lemaitre-Robertosn-Walker (FLRW) universe in
$f(R,T)$ gravity, and the validity of the first and second law of
thermodynamics in this scenario were checked. It was shown that the Friedmann
equations can be expressed in the form of the first law of thermodynamics and
that the second law of thermodynamics holds both in the phantom and
non-phantom phases. The energy conditions have also been extensively explored
in $f(R,T)$ gravity, for instance, in an FLRW universe with perfect fluid
\cite{AP4}, and in the context of exact power-law solutions \cite{AP5}. It was
found that certain constraints have to be met in order to ensure that power
law solutions may be stable, and match the bounds prescribed by the energy
conditions. In \cite{AP6}, it was also shown that the energy conditions were
satisfied for specific models. Furthermore, an analysis of the perturbations
and stability of de Sitter solutions and power-law solutions was performed,
and it was shown that for those models in which the energy conditions are
satisfied, de Sitter solutions and power-law solutions may be stable.

The gravitational baryogenesis epoch in the context of $f(R,T)$ theory was
studied in \cite{AP7}. In particular, the following two models,
$f(R,T)=R+\alpha T+\beta T^{2}$ and $f(R,T)=R+\mu R^{2}+\lambda T$, were
considered, with the assumption that the universe is filled by dark energy and
perfect fluid. Moreover, it was assumed that the baryon to entropy ratio
during the radiation domination era was non-zero. In \cite{AP8} a particular
form for $f(R,T)=R+\lambda T$ was adopted, and inflationary scenarios were
investigated. By taking the slow-roll approximation, interesting models were
obtained, with a Klein-Gordon potential leading to observationally consistent
inflation observables. These results make it clear-cut that, in addition to
the Ricci scalar and scalar fields, the trace of the energy momentum tensor
also plays a major role in driving inflationary evolution. The slow-roll
approximation of cosmic inflation within the context of $f(R,T)$ gravity was
studied in \cite{AP9}. A gravitational model unifying both $f\left(
R,L_{m}\right)  $ and $f(R,T)$ theories into the $f\left(  R,L_{m},T\right)  $
was developed in \cite{HaHa}.

In all the studied models, it is necessary to impose a specific form for the
function $f(R,T),$ where usually $f(R,T)=R+2f(T)=R+\lambda T$, and the matter
content is a perfect fluid. Therefore, it would be highly desirable to have a
general method for determining the functional form of $f(R,T)$ in such a way
that the resulting FE are integrable (in the Louville form). Such an approach
would also allow us to make definite cosmological predictions, and to compare
theory with observations. Moreover, it would be important to determine the
form the different physical and geometrical quantities may take in order that
the FE could be integrated. Therefore, it would be necessary to have a
fundamental method according to which the form (or forms) of the geometrical
and physical quantities could be fixed, and, if possible, to calculate exact
solutions for the proposed models. In this context, several geometric methods
can be used, such as, the matter collineation (self-similar solutions) approach.

For the matter content in the $f(R,T)$ gravity it is generally assumed that it
consists of a fluid that can be characterized by two thermodynamic parameters
only, the energy density and the pressure, respectively. To solve the
resulting FE it is necessary to consider particular classes of $f(R,T)$%
-modified gravity models, obtained by explicitly specifying the functional
form of $f$. Generally, the field equations also depend on the physical nature
of the matter field. Hence, in the case of $f(R,T)$ gravity, depending on the
nature of the matter source, for each choice of $f$, we can obtain several
theoretical models, corresponding to different matter models.

The study of self-similar (SS) models is quite important since, as it has been
pointed out in \cite{RJ}, they correspond to equilibrium points. Therefore, a
large class of orthogonal spatially homogeneous models are asymptotically
self-similar at the initial singularity, and are approximated by exact perfect
fluid, or vacuum self-similar power law models.

Exact self-similar power-law models can also approximate general Bianchi
models at intermediate stages of their evolution \cite{Coley}. From the
geometrical point of view, self-similarity is defined by the existence of a
homothetic vector field $\mathcal{H}$ in the spacetime, which satisfies the
equation $L_{\mathcal{H}}g_{\mu\nu}=2\alpha g_{\mu\nu}$ \cite{CC}-\cite{CC1}.
The geometry and physics at different points on an integral curve of a
homothetic vector field (HVF) differ only by a change in the overall length
scale, and in particular, any dimensionless scalar will be constant along the
integral curves. Self-similarity also plays an important role in the study of stellar structure \cite{Mak}.

Therefore, the aim of this paper is to study $f(R,T)$ cosmological models by
using geometrical methods in order to determine the form of the physical
quantities, as well as the other unknown functions that appear in the field
equations. In particular, we are interested in studying whether self-similar
solutions do exist, and how must each physical quantity behaves in order that
the FE admit such class of solutions. We also show how to use this approach in
order to generate additional solutions.

The present paper is organized as follows. In Section II we introduce first
the basic gravitational model, and we outline the field equations. Then we
determine the exact form that each physical quantity may take in order for the
FE to admit exact self-similar solutions through the matter collineation
approach. Two models are explicitly studied. In Section III, we study the
case$\ f(R,T)=f_{1}(R)+f_{2}(T)$, and in Section IV, we explore the case $f(R,T)=f_{1}%
(R)+f_{2}(R)f_{3}(T)$ is studied. In each case, we state a general theorem
which determines completely the form of the unknown functions $f_{i}$ for the
field equations to admit self-similar solutions. We also state some
corollaries as limiting cases. The results are quite general and valid for any
self-similar metric. Section V is devoted to the investigation of exact
solutions to two particular models. In Section VI, we discuss our findings and
conclude our work. In the Appendix, we show how to apply this geometrical
method to obtain other types of solutions, such as the exponential one.

\section{Field equations of $f(R,T)$ gravity, and self-similarity}

In this Section we briefly introduce the field equations of the
$f(R,T)$ gravity theory, and we outline the basic concepts of self-similarity.

\subsection{Field equations}

For a general function $f=f(R,T)$, the FE and the matter conservation equation
read (see for instance \cite{H1}, and \cite{ConOK}-\cite{ConOK2} for the
correct conservation equation),
\begin{align}
R_{\mu\nu}-\frac{1}{2}Rg_{\mu\nu}  &  =\frac{1}{f_{R}}\left(  \left(
8\pi-f_{T}\right)  T_{\mu\nu}-f_{T}\Theta_{\mu\nu}+\frac{1}{2}\left(
f-f_{R}R\right)  g_{\mu\nu}-\left(  g_{\mu\nu}\square-\nabla_{\mu}\nabla_{\nu
}\right)  f_{R}\right)  ,\label{FE1}\\
\nabla^{\mu}T_{\mu\nu}  &  =\frac{f_{T}}{8\pi-f_{T}}\left[  \left(  T_{\mu\nu
}+\Theta_{\mu\nu}\right)  \nabla^{\mu}\ln f_{T}+\nabla^{\mu}\Theta_{\mu\nu
}-\frac{1}{2}g_{\mu\nu}\nabla^{\mu}T\right]  , \label{FE2}%
\end{align}
where%
\begin{equation}
\Theta_{\mu\nu}=-2T_{\mu\nu}+g_{\mu\nu}L_{\mathrm{m}}-2g^{\alpha\beta}%
\frac{\partial^{2}L_{\mathrm{m}}}{\partial g^{\mu\nu}\partial g^{\alpha\beta}%
},
\end{equation}
and $L_{\mathrm{m}}$ stands for the Lagrangian of the matter field. We
consider as matter model a perfect fluid, and we take $L_{\mathrm{m}}=-p$
\cite{Lagran0}-\cite{Lagran1}, thus obtaining
\begin{equation}
\Theta_{\mu\nu}=-2T_{\mu\nu}-pg_{\mu\nu}\,,\qquad T_{\mu\nu}=\left(
\rho+p\right)  u_{\mu}u_{\nu}+pg_{\mu\nu}\,,\qquad T=-\rho+3p,\,\qquad
p=\omega\rho.
\end{equation}

\subsection{Self-similarity and $f(R,T)$ gravity}

Self-similarity is defined by the existence of a homothetic vector field ${V}$
in the spacetime (\cite{CT, 22}, which satisfies the condition
\begin{equation}
L_{V}g_{\mu\nu}=2\alpha g_{\mu\nu}, \label{gss1}%
\end{equation}
where $g_{\mu\nu}$ is the metric tensor, $L_{V}$ denotes Lie differentiation
along the vector field ${V}\in\mathfrak{X}(M)$ and $\alpha$ is a constant (see
for general reviews \cite{CC} and \cite{Hall}, respectively).

If we consider the Einstein equations in standard general relativity
$G_{\mu\nu}=8\pi GT_{\mu\nu},$ where $T_{\mu\nu}$ is an effective
stress-energy tensor, then if the spacetime is homothetic, the energy-momentum
tensor of the matter fields must satisfy $L_{V}T_{\mu\nu}=0$. Nevertheless, in
this work, we are not interested in finding the set of vector fields
$V\in\mathfrak{X}(M),$ that satisfy the standard general relativistic
condition. If the homothetic vector field in known (HVF) $\mathcal{H}$, that
is, the vector field satisfying the condition $L_{\mathcal{H}}g_{\mu\nu
}=2g_{\mu\nu}$, (see, for example \cite{CC}), then $\mathcal{H}$ is also a
matter collineation, i.e., it satisfies the condition $L_{\mathcal{H}}%
T_{\mu\nu}=0$. In the following, we extend this condition to the $f(R,T)$
theory to determine the behavior of the main geometrical and physical
quantities so that the field equations admit self-similar solutions (see
\cite{Hall}).

Therefore, in a modified gravity theory we impose generally the condition
\begin{equation}
L_{\mathcal{H}}T_{\mu\nu}^{\mathrm{eff}}=0,
\end{equation}
where $\mathcal{H}$ is a homothetic vector field (HVF), i.e. it verifies the
equation: $L_{\mathcal{H}}g_{\mu\nu}=2g_{\mu\nu},$ for a given metric tensor,
and where $T_{\mu\nu}^{\mathrm{eff}}$ is the effective energy-momentum tensor
of the theory. In \cite{Tony2} it was shown that it is enough to calculate
$L_{\mathcal{H}}\,^{(i)}T_{\mu\nu}^{\mathrm{eff}}=0$, for each component of
the energy-momentum tensor.

For simplicity, in the following we consider the flat, homogeneous and
isotropic FLRW metric only,
\begin{equation}
ds^{2}=dt^{2}-a^{2}(t)\left(  dx^{2}+dy^{2}+dz^{2}\right)  ,
\end{equation}
where $a$ is the scale factor. Thus the HVF yields (see for instance
\cite{HW}),%
\begin{equation}
\mathcal{H}=t\partial_{t}+\left(  1-a_{1}\right)  \left(  x\partial
_{x}+y\partial_{y}+z\partial_{z}\right)  ,
\end{equation}
where $a_{1}\in\mathbb{R},$ is a numerical constant.

Note at this point that the scale factor must behave as $a(t)=t^{a_{1}},$
$a_{1}\in\mathbb{R}^{+}$. We adopt such a simplification since, as we have
shown in \cite{Tony2}, all the physical quantities are homogeneous, that is,
they only depend on the time $t$. Hence, the unique equation of
$L_{\mathcal{H}}^{(i)}T_{\mu\nu}=0,$ that is interesting for us, is the one
corresponding to the temporal coordinate $t\partial_{t}.$ Nevertheless, the
results are also valid for any homogeneous self-similar metric (Bianchi
models, for example) since their homothetic vector field always takes the from
$\mathcal{H}=t\partial_{t}+...$. (see for instance \cite{WE}). This kind of
models are called spatially homogeneous, that is, the $G_{6}$ FLRW. The $G_{4}$
models. If the $G_{4}$ has a subgroup $G_{3}$ which acts simply transitively
on the three-dimensional orbits we obtain the LRS Bianchi models. Otherwise,
we obtain the Kantowski-Sachs models, and the SS Bianchi models. In addition, although the model evolves in time, the evolution is fully specified by this symmetry property, and the FE are purely algebraic when expressed in terms of expansion-normalized variables.

Then the FE of the $f(R,T)$ theory do admit self-similar solutions if
\begin{equation}
L_{\mathcal{H}}\left(  T^{eff}\right)  =0,\qquad T^{eff}=\frac{1}{f_{R}%
}\left(  \left(  8\pi-f_{T}\right)  T_{\mu\nu}-f_{T}\Theta_{\mu\nu}+\frac
{1}{2}\left(  f-f_{R}R\right)  g_{\mu\nu}-\left(  g_{\mu\nu}\square
-\nabla_{\mu}\nabla_{\nu}\right)  f_{R}\right)  ,
\end{equation}
where $L_{\mathcal{H}}$ is the Lie derivative along the HVF (see for instance
\cite{Tony2} and \ \cite{Tony3} where this method was applied to other
gravitational theories).

\section{Case 1: $f(R,T)=f_{1}(R)+f_{2}(T)$}

Under the hypothesis $f(R,T)=f_{1}(R)+f_{2}(T),$ the FE (\ref{FE1}-\ref{FE2})
take the form
\begin{equation}
G_{\mu\nu}=\frac{1}{f_{1R}}\left(  \left(  8\pi+f_{2T}\right)  T_{\mu\nu
}+f_{2T}pg_{\mu\nu}+\frac{1}{2}f_{2}g_{\mu\nu}+\frac{1}{2}\left(  f_{1}%
-f_{1R}R\right)  g_{\mu\nu}-\left(  g_{\mu\nu}\square-\nabla_{\mu}\nabla_{\nu
}\right)  f_{1R}\right)  . \label{FEC1}%
\end{equation}

We state the following Theorem.

\begin{theorem}
\label{Teorema1} The FE (\ref{FEC1}) admit SS solutions if and only if
$f_{1}=R^{n},$ and $f_{2}=\lambda T,$ with $n,\lambda\in\mathbb{R}.$
\end{theorem}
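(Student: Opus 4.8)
The plan is to exploit the fact, stated earlier in the excerpt, that for the flat FLRW metric the homothetic vector field is $\mathcal{H}=t\partial_t+(1-a_1)(x\partial_x+y\partial_y+z\partial_z)$, that all physical quantities depend only on $t$, and that it therefore suffices to impose $L_{\mathcal{H}}\,{}^{(i)}T^{\mathrm{eff}}_{\mu\nu}=0$ component by component, keeping only the temporal part $t\partial_t$. With $a(t)=t^{a_1}$ the Ricci scalar and all kinematic scalars become pure powers of $t$, so the whole strategy reduces to demanding that each scalar coefficient appearing in $T^{\mathrm{eff}}_{\mu\nu}$ in \eqref{FEC1} be invariant (up to the appropriate weight) under the scaling generated by $\mathcal{H}$. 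I would first decompose $T^{\mathrm{eff}}_{\mu\nu}$ into its independent scalar building blocks — the matter piece $(8\pi+f_{2T})T_{\mu\nu}+f_{2T}p g_{\mu\nu}$, the $f_2$-potential term $\tfrac12 f_2 g_{\mu\nu}$, the curvature term $\tfrac12(f_1-f_{1R}R)g_{\mu\nu}$, and the derivative terms $(g_{\mu\nu}\square-\nabla_\mu\nabla_\nu)f_{1R}$ — and compute the Lie-derivative condition on each.

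Next I would separate the two functional sectors, since $f_1$ depends only on $R=R(t)$ and $f_2$ only on $T=T(t)$, and $R$ and $T$ are functionally independent as we vary $\omega$ and the metric. For the $f_2$ sector, imposing $L_{\mathcal{H}}$ annihilating the matter-plus-potential terms forces a scaling relation between $f_{2T}\rho$, $f_2$, and the known power-law behaviour of $\rho$ and $T$; solving the resulting functional-differential equation (an Euler-type equation in $T$) yields $f_2$ linear in $T$, i.e. $f_2=\lambda T$. For the $f_1$ sector, the curvature term $f_1-f_{1R}R$ together with the second-derivative term $\nabla_\mu\nabla_\nu f_{1R}$, all divided by the overall $f_{1R}^{-1}$ prefactor, must scale homogeneously; since $R\sim t^{-2}$ the requirement that $f_{1R}R/f_{1R}$-type ratios and the $\square f_{1R}$ term carry the correct conformal weight translates into a homogeneity condition $R f_{1R}/f_1=\mathrm{const}$, whose solution is the power law $f_1=R^n$.

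Concretely, the backbone of both computations is the scaling operator: under $\mathcal{H}$ a scalar $\Phi(t)=\Phi_0 t^{k}$ satisfies $L_{\mathcal{H}}\Phi=t\,\dot\Phi=k\Phi$, so $L_{\mathcal{H}}\Phi=0$ iff $k=0$, and more generally a tensor component of conformal weight $w$ must satisfy $t\dot\Phi=w\Phi$. The necessity direction then follows by writing each block's weight as a function of the unknown $f_i$ and setting the net weight of every independent block equal, producing ODEs for $f_1$ and $f_2$; the sufficiency direction is the direct verification that $f_1=R^n,\ f_2=\lambda T$ indeed make every block scale with one common weight so that $L_{\mathcal{H}}T^{\mathrm{eff}}_{\mu\nu}=0$ holds identically. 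I would present the argument so that the two sectors decouple cleanly, reducing the theorem to two independent scaling-invariance ODEs.

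The main obstacle I anticipate is the curvature sector, specifically handling the derivative terms $\left(g_{\mu\nu}\square-\nabla_\mu\nabla_\nu\right)f_{1R}$ and the nonlinear $\tfrac12(f_1-f_{1R}R)g_{\mu\nu}$ piece simultaneously: because $f_{1R}$ is an \emph{arbitrary} composite function of $R(t)$, the condition is a genuine functional equation rather than a single power-matching, and one must verify that \emph{no} nonpower solution can conspire to cancel across the different blocks (the $\square f_{1R}$ term involves $\dot f_{1R}$ and $\ddot f_{1R}$, hence $f_1''$ and $f_1'''$). The cleanest way to dispose of this is to argue that each block is an independent homogeneous function of $t$ with its own weight fixed by $R\sim t^{-2}$, so that $L_{\mathcal{H}}=0$ on the sum forces each weight to coincide, and the only $f_1$ reconciling all of them is the monomial $R^n$; the analogous but simpler argument in the $T$-sector pins $f_2=\lambda T$.
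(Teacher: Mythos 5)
Your proposal follows essentially the same route as the paper's own proof: decompose $T^{\mathrm{eff}}_{\mu\nu}$ from (\ref{FEC1}) into the matter, $f_2$, curvature, and $\square f_{1R}$ blocks, impose $L_{\mathcal{H}}{}^{(i)}T_{\mu\nu}=0$ on each block using the $t\partial_t$ part of the homothetic vector with $R\sim t^{-2}$, and solve the resulting scaling ODEs (your homogeneity condition $Rf_{1R}/f_1=\mathrm{const}$ and the Euler-type equation in $T$ are exactly the content of the paper's equations $f_{1RR}=f_{1R}^2/f_1-f_1/R$ and $f_2 F^{-1}\sim t^{-2}$, giving $f_1=R^n$ and $f_2=\lambda T$). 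The only cosmetic difference is that you phrase the block-by-block conditions as conformal-weight matching and explicitly flag the cross-block cancellation issue, which the paper handles implicitly by treating each ${}^{(i)}T_{\mu\nu}$ separately (citing its earlier justification).
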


\begin{proof}
We start by proving that $f_{1}=R^{n}.$ Thus, from $L_{\mathcal{H}}\left(
T_{\mu\nu}^{R}\right)  =0,$ that is,%
\begin{equation}
T_{\mu\nu}^{R}=\frac{1}{2}\left(  F^{-1}f_{1}-R\right)  g_{\mu\nu}%
+F^{-1}\left(  \nabla_{\mu}\nabla_{\nu}-g_{\mu\nu}\square\right)  F,\qquad
F=f_{1R},
\end{equation}
then%
\begin{equation}
L_{\mathcal{H}}\left(  \frac{1}{2}\left(  F^{-1}f_{1}-R\right)  g_{\mu\nu
}\right)  =0,
\end{equation}
we calculate only the first component, $L_{\mathcal{H}}\left(  T_{11}\right)
=0,$ since with the other components we only obtain restrictions on the scale
factor. Thus, this yields%
\begin{equation}
t\left(  f_{1}^{\prime}-R^{\prime}F-F^{\prime}f_{1}F^{-1}\right)
=-2(f_{1}-FR),
\end{equation}
where $^{\prime }=d/dt,$ which may be re-written in the following form%
\begin{equation}
t\left(  -f_{1R}^{-1}f_{1RR}R^{\prime}f_{1}\right)  +2(f_{1}-f_{1R}R)=0,
\end{equation}
since
\begin{equation}
f_{1}^{\prime}=\frac{df_{1}}{dR}R^{\prime}=FR^{\prime},\qquad F^{\prime
}=f_{1RR}R^{\prime}.
\end{equation}

Now, by taking into account that $tR^{\prime}\thickapprox-2R,$ then, the
equation $L_{\mathcal{H}}\left(  T_{11}\right)  =0$ yields%
\begin{equation}
f_{1RR}=\frac{f_{1R}^{2}}{f}-\frac{f_{1}}{R}\qquad\Longrightarrow\qquad
f_{1}=C_{1}R^{n},\qquad n\in\mathbb{R},
\end{equation}
where $C_{1}$ a constant of integration.

Now, we calculate%
\begin{equation}
L_{\mathcal{H}}\left(  F^{-1}\left(  \nabla_{\mu}\nabla_{\nu}-g_{\mu\nu
}\square\right)  F\right)  =0,
\end{equation}
where the first component of the above equation, $L_{\mathcal{H}}\left(
T_{11}\right)  =0,$ yields
\begin{equation}
t\left(  F^{\prime\prime}H+F^{\prime}H^{\prime}-\frac{F^{\prime2}}{F}H\right)
=-2HF^{\prime},
\end{equation}
where $H=a^{\prime}/a,$ but we already know that $H=h_{0}t^{-1}$ (within the
self-similar approach), so we rewrite this as%
\begin{equation}
F^{\prime\prime}=\frac{F^{\prime2}}{F}-\frac{F^{\prime}}{t}\qquad
\Longrightarrow\qquad F=F_{0}t^{C_{1}},\qquad C_{1}\in\mathbb{R}.
\end{equation}
Therefore, if we set $C_{1}=2\left(  1-n\right)  $ then we arrive at the
conclusion that $F=F_{0}t^{2(1-n)}=nR^{n-1}.$ Note that $R=R_{0}t^{-2}.$

Now we calculate
\begin{equation}
L_{\mathcal{H}}\left(  T_{\mu\nu}^{T}\right)  =L_{\mathcal{H}}\left(
F^{-1}\left(  \left(  8\pi+f_{2T}\right)  T_{\mu\nu}+f_{2T}pg_{\mu\nu}%
+\frac{1}{2}f_{2}g_{\mu\nu}\right)  \right)  =0,
\end{equation}
where it will be sufficient to prove that
\begin{equation}
L_{\mathcal{H}}\left(  F^{-1}T_{\mu\nu}\right)  =0,\qquad L_{\mathcal{H}%
}\left(  F^{-1}f_{2}g_{\mu\nu}\right)  =0,
\end{equation}
so, from $L_{\mathcal{H}}\left(  F^{-1}T_{11}\right)  =0,$ we get that%
\begin{equation}
\frac{\rho^{\prime}}{\rho}-\frac{F^{\prime}}{F}=-\frac{2}{t}\qquad
\Longrightarrow\qquad\rho F^{-1}=\rho_{0}t^{-2},
\end{equation}
where $\rho_{0}$ is a constant of integration.

In the same way, we obtain a similar behavior for $p,$ that is, $pF^{-1}%
=p_{0}t^{-2}.$ By taking into account our previous result, $F=F_{0}%
t^{2(1-n)},$ then%
\begin{equation}
\rho=\rho_{0}t^{-2n},
\end{equation}
which implies that
\begin{equation}
T=-\rho+3p=T_{0}t^{-2n}.
\end{equation}

Now from $L_{\mathcal{H}}\left(  F^{-1}f_{2}g_{\mu\nu}\right)  =0,$%
\begin{equation}
\frac{f_{2}^{\prime}}{f}-\frac{F^{\prime}}{F}=-\frac{2}{t},\qquad
f_{2}^{\prime}=f_{2T}T^{\prime},\qquad\Longrightarrow\qquad f_{2}F^{-1}%
=t^{-2},
\end{equation}
thus we may set%
\begin{equation}
f_{2}=\lambda T,\qquad\lambda\in\mathbb{R},
\end{equation}
as it is required.

Note that with these results, the rest of equations, that is, $L_{\mathcal{H}%
}\left(  F^{-1}f_{2T}T_{\mu\nu}\right)  =0,$ and $L_{\mathcal{H}}\left(
F^{-1}f_{2T}pg_{\mu\nu}\right)  =0,$ are trivially satisfied.
\end{proof}

\bigskip

Therefore,\textbf{ Theorem 1} states that $f_{1}(R)=R^{n},$ $n\geq1,$ in the
framework of the self-similar solution, the scale factor (or scale factors for
the self-similar Bianchi models, for example) follows a power-law $a=t^{a_{1}%
}$, which implies that the Ricci curvature (scalar curvature) behaves as
$R=R_{0}t^{-2}.$ We summarize the result below in the form
\begin{align}
f(R,T)  &  =R^{n}+\lambda T,\qquad R=R_{0}t^{-2},\qquad a=t^{a_{1}},\qquad
a_{1}\in\mathbb{R}^{+},\nonumber\\
p  &  =\omega\rho=p_{0}t^{-2n},\qquad T=-\rho+3p=T_{0}t^{-2n},\qquad
n,\lambda\in\mathbb{R}. \label{R1}%
\end{align}

This result is new, and generalizes previous works on self-similar
cosmological models.

From the above \textbf{Theorem 1}, we may deduce the following trivial
corollary, in which we reobtain the first $f(R,T)$ type model proposed by
Harko et al in \cite{H1}.

\begin{corollary}
\label{Colo1}If $n=1,$ then $f_{2}=\lambda T,$ therefore
\begin{equation}
f(R,T)=R+\lambda T.
\end{equation}

\end{corollary}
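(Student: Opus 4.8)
The plan is to obtain the statement as an immediate specialization of \textbf{Theorem 1}. Recall that the proof of that theorem produces two logically independent conclusions: the form $f_{1}=C_{1}R^{n}$ of the geometric sector, and the form $f_{2}=\lambda T$ of the matter sector. The crucial observation is that the derivation of $f_{2}=\lambda T$ never invokes a particular value of $n$: it follows from imposing $L_{\mathcal{H}}\left(F^{-1}f_{2}g_{\mu\nu}\right)=0$, which yields $f_{2}F^{-1}=t^{-2}$, and then combining this with the self-similar behaviors $F=F_{0}t^{2(1-n)}$ and $T=T_{0}t^{-2n}$ already established in the theorem. Hence $f_{2}=\lambda T$ holds verbatim for every admissible $n$, in particular for $n=1$.

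It remains only to evaluate the geometric sector at $n=1$. Setting $n=1$ in $f_{1}=C_{1}R^{n}$ gives $f_{1}=C_{1}R$, and the integration constant $C_{1}$ merely rescales the effective gravitational coupling, so it may be normalized to unity, leaving $f_{1}=R$. Adding the two sectors then yields $f(R,T)=f_{1}(R)+f_{2}(T)=R+\lambda T$, as claimed.

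Since the result is a direct substitution, no genuine obstacle arises; the only point worth recording is a consistency check. For $n=1$ one has $F=f_{1R}=C_{1}$, a constant, so that $F=F_{0}t^{2(1-n)}$ reduces to $F=F_{0}$ and the higher-derivative contribution $F^{-1}\left(\nabla_{\mu}\nabla_{\nu}-g_{\mu\nu}\square\right)F$ vanishes identically. This confirms that the $n=1$ case collapses to the standard model $f(R,T)=R+\lambda T$ of Harko \emph{et al.}, reproducing Einstein gravity with a trace-dependent source.
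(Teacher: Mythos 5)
Your proposal is correct and takes essentially the same route as the paper, which states this corollary as a trivial specialization of Theorem~\ref{Teorema1}: set $n=1$ in $f_{1}=R^{n}$ (absorbing the integration constant) while noting that the conclusion $f_{2}=\lambda T$ holds independently of $n$. Your additional consistency check, that for $n=1$ the term $F^{-1}\left(\nabla_{\mu}\nabla_{\nu}-g_{\mu\nu}\square\right)F$ vanishes since $F=f_{1R}$ is constant, is sound and merely makes explicit what the paper leaves implicit.
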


\section{Case 2: $f(R,T)=f_{1}(R)+f_{2}(R)f_{3}(T)$}

Under the assumption $f(R,T)=f_{1}(R)+f_{2}(R)f_{3}(T),$ the FE (\ref{FE1}%
-\ref{FE2}) read%
\begin{equation}
G_{\mu\nu}=\frac{1}{K}\left(  \left(  8\pi+f_{T}\right)  T_{\mu\nu}%
+f_{T}pg_{\mu\nu}+\frac{1}{2}f_{2}f_{3}g_{\mu\nu}+\frac{1}{2}\left(
f_{1}-KR\right)  g_{\mu\nu}-\left(  g_{\mu\nu}\square-\nabla_{\mu}\nabla_{\nu
}\right)  K\right)  , \label{FEC2}%
\end{equation}
where $K=f_{1R}+f_{2R}f_{3}$ and $f_{T}=f_{2}f_{3T}.$ We state the following Theorem.

\begin{theorem}
\label{Teorema2} The FE (\ref{FEC2}) admit self-similar solutions if and only
if $f(R,T)=f_{1}(R)+f_{2}(R)f_{3}(T)=R^{n}+R^{n(1-l)}T^{l},$ with
$n,l\in\mathbb{R}.$
\end{theorem}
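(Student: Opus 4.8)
The plan is to mimic the structure of the proof of Theorem~\ref{Teorema1}, splitting the effective energy-momentum tensor of the field equations (\ref{FEC2}) into its curvature part and its trace part, and imposing $L_{\mathcal{H}}=0$ on each piece separately. I would work under the self-similar assumptions already established: $a=t^{a_{1}}$, $H=h_{0}t^{-1}$, $R=R_{0}t^{-2}$, and the homogeneity of all physical quantities, so that only the temporal component ($T_{11}$) of each collineation condition needs to be computed.

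First I would isolate the purely geometric contributions, namely the terms $\tfrac{1}{2}(f_{1}-KR)g_{\mu\nu}$ and $-(g_{\mu\nu}\square-\nabla_{\mu}\nabla_{\nu})K$ together with the $K^{-1}$ prefactor, exactly as was done for $F=f_{1R}$ in Case~1. Since $R=R_{0}t^{-2}$, any function of $R$ is automatically a power of $t$, and the scaling condition $L_{\mathcal{H}}(K^{-1}(\cdots)g_{\mu\nu})=0$ should force $f_{1}=R^{n}$ and force the effective ``scalar-tensor factor'' $K$ to scale as a fixed power of $t$, giving $K\propto t^{2(1-n)}\propto R^{n-1}$ as before. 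The essential new feature is that now $K=f_{1R}+f_{2R}f_{3}$ mixes $R$ and $T$ dependence, so I would next examine the matter sector: from $L_{\mathcal{H}}(K^{-1}T_{\mu\nu})=0$ and $L_{\mathcal{H}}(K^{-1}f_{2}f_{3}g_{\mu\nu})=0$ I expect to recover power-law scaling $\rho,p\propto t^{-2n}$ and hence $T=T_{0}t^{-2n}$, matching (\ref{R1}).

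The key step is determining the product $f_{2}(R)f_{3}(T)$. Having established that $R\propto t^{-2}$ and $T\propto t^{-2n}$, the requirement that each term in (\ref{FEC2}) scale as $t^{-2}$ (the scaling of $G_{\mu\nu}$) converts the collineation equations into consistency conditions on the exponents. I would posit separable power-law forms $f_{2}=R^{\,p}$ and $f_{3}=T^{\,l}$, substitute into the conditions $L_{\mathcal{H}}(K^{-1}f_{2}f_{3}g_{\mu\nu})=0$ and $L_{\mathcal{H}}(K^{-1}f_{T}T_{\mu\nu})=0$ with $f_{T}=f_{2}f_{3T}$, and read off the relation linking the exponents. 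Matching the time-power of $f_{2}f_{3}=R^{p}T^{l}\propto t^{-2p-2nl}$ to the required overall scaling should pin down $p=n(1-l)$, yielding $f_{2}(R)f_{3}(T)=R^{n(1-l)}T^{l}$, which is precisely the claimed form.

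The main obstacle I anticipate is the entanglement of the $R$- and $T$-dependence inside $K$: unlike Case~1 where $F=f_{1R}$ depended on $R$ alone, here the second-derivative term $(g_{\mu\nu}\square-\nabla_{\mu}\nabla_{\nu})K$ involves time-derivatives of a quantity that carries both the geometric scaling ($R\propto t^{-2}$) and the matter scaling ($T\propto t^{-2n}$), so the separation into ``$f_{1}=R^{n}$'' and ``$f_{2}f_{3}=R^{n(1-l)}T^{l}$'' is not as clean. I would handle this by first fixing $f_{1}=R^{n}$ from the leading geometric condition, then treating $f_{2R}f_{3}$ as a correction whose self-similar scaling must independently match, and finally checking (as at the end of the Case~1 proof) that the remaining conditions $L_{\mathcal{H}}(K^{-1}f_{T}T_{\mu\nu})=0$ and $L_{\mathcal{H}}(K^{-1}f_{T}pg_{\mu\nu})=0$ are satisfied identically once the exponent relation $p=n(1-l)$ holds. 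The sufficiency direction (that this $f$ does admit self-similar solutions) should then follow by direct verification that every term scales homogeneously in $t$.
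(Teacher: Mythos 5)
Your proposal is correct in substance and lands on the same exponent relation $m=n(1-l)$, but it organizes the argument differently from the paper, and the difference is worth spelling out. The paper treats all seven pieces of $T_{\mu\nu}^{eff}$ symmetrically and extracts only \emph{ratio} relations ($\rho/K\thickapprox t^{-2}$, $f_{1}/K\thickapprox t^{-2}$, $f_{2}f_{3}/K\thickapprox t^{-2}$, $f_{T}\rho/K\thickapprox t^{-2}$); it then explicitly concedes that ``we do not know anything about their individual behavior,'' and proceeds by \emph{assuming} power-law forms for all three functions ($f_{1}=R^{n}$, $f_{2}=R^{m}$, $f_{3}=T^{l}$) \emph{and} the matter scaling $\rho\thickapprox p\thickapprox T\thickapprox t^{-x}$, after which the three consistency conditions $f_{2}f_{3T}=const.$, $f_{1R}\thickapprox f_{2R}f_{3}$, $f_{1}\thickapprox f_{2}f_{3}$ together with $f_{1}\thickapprox\rho$ yield $x=2n$ and $m=n(1-l)$. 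You instead follow the sequential, Case-1-style route: pin down the geometric sector first ($f_{1}=R^{n}$, $K\propto t^{2(1-n)}$), deduce $\rho,p,T\propto t^{-2n}$ from the matter conditions (so $x=2n$ is derived, not assumed), and only then invoke a power-law ansatz for $f_{2},f_{3}$ with a single exponent-matching step. This buys something real — fewer a priori assumptions — but note that it stands or falls on a step you only gesture at: the claim that the geometric conditions force $f_{1}=R^{n}$ requires using the second-derivative piece $L_{\mathcal{H}}\left(K^{-1}\left(g_{\mu\nu}\square-\nabla_{\mu}\nabla_{\nu}\right)K\right)=0$ to conclude $K=K_{0}t^{C}$ (the Case 1 computation for $F$ carries over verbatim here because $K$ is homogeneous in $t$), and then combining this with $f_{1}/K\propto t^{-2}$ to get that $f_{1}$ is a pure power of $t$, hence of $R$. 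Your phrase ``exactly as was done for $F=f_{1R}$ in Case 1'' glosses over the fact that the paper's ODE derivation of $f_{1}=C_{1}R^{n}$ relied on $F$ being precisely $f_{1R}$, which fails here since $K=f_{1R}+f_{2R}f_{3}$; only the scaling argument transfers. Interestingly, the paper lists this seventh piece $T_{\mu\nu}^{7}$ but never actually uses it, which is exactly why it must fall back on the extra hypotheses that your route avoids.
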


\begin{proof}
We define%
\begin{align}
T_{\mu\nu}^{eff}  &  =\frac{1}{K}\left(  \left(  8\pi+f_{T}\right)  T_{\mu\nu
}+f_{T}pg_{\mu\nu}+\frac{1}{2}f_{2}f_{3}g_{\mu\nu}+\frac{1}{2}\left(
f_{1}-KR\right)  g_{\mu\nu}-\left(  g_{\mu\nu}\square-\nabla_{\mu}\nabla_{\nu
}\right)  K\right)  ,\\
K  &  =f_{1R}+f_{2R}f_{3},\qquad f_{T}=f_{2}f_{3T},
\end{align}
and also consider
\begin{align}
T_{\mu\nu}^{1}  &  =\frac{1}{K}T_{\mu\nu},\qquad T_{\mu\nu}^{2}=\frac{f_{T}%
}{K}T_{\mu\nu},\qquad T_{\mu\nu}^{3}=\frac{f_{T}}{K}pg_{\mu\nu},\qquad
T_{\mu\nu}^{4}=\frac{f_{2}f_{3}}{K}g_{\mu\nu},\\
T_{\mu\nu}^{5}  &  =\frac{f_{1}}{K}g_{\mu\nu},\qquad T_{\mu\nu}^{6}=Rg_{\mu
\nu},\qquad T_{\mu\nu}^{7}=\frac{1}{K}\left(  g_{\mu\nu}\square-\nabla_{\mu
}\nabla_{\nu}\right)  K
\end{align}
where $R$, by hypothesis, behaves as $R\thickapprox t^{-2}.$

Then, from%
\begin{align}
T_{\mu\nu}^{1}  &  =\frac{1}{K}T_{\mu\nu},\qquad L_{\mathcal{H}}\left(
\frac{1}{K}T_{11}\right)  =\frac{\rho^{\prime}}{\rho}-\frac{K^{\prime}}%
{K}=-\frac{2}{t},\qquad\frac{\rho}{K}=t^{-2},\\
T_{\mu\nu}^{2}  &  =\frac{f_{T}}{K}T_{\mu\nu},\qquad L_{\mathcal{H}}\left(
\frac{f_{T}}{K}T_{11}\right)  =\frac{\rho^{\prime}}{\rho}+\frac{f_{T}^{\prime
}}{f_{T}}-\frac{K^{\prime}}{K}=-\frac{2}{t},\qquad\frac{f_{T}}{K}\rho
=t^{-2},\\
T_{\mu\nu}^{3}  &  =\frac{f_{T}}{K}pg_{\mu\nu},\qquad L_{\mathcal{H}}\left(
\frac{f_{T}}{K}pg_{11}\right)  =\frac{p^{\prime}}{p}+\frac{f_{T}^{\prime}%
}{f_{T}}-\frac{K^{\prime}}{K}=-\frac{2}{t},\qquad\frac{f_{T}}{K}p\thickapprox
t^{-2},\\
T_{\mu\nu}^{4}  &  =\frac{f_{2}f_{3}}{K}g_{\mu\nu},\qquad L_{\mathcal{H}%
}\left(  \frac{f_{2}f_{3}}{K}g_{11}\right)  =\frac{f_{2}^{\prime}}{f_{2}%
}+\frac{f_{3}^{\prime}}{f_{3}}-\frac{K^{\prime}}{K}=-\frac{2}{t},\qquad
\frac{f_{2}f_{3}}{K}\thickapprox t^{-2},\\
T_{\mu\nu}^{5}  &  =\frac{f_{1}}{K}g_{\mu\nu},\qquad L_{\mathcal{H}}\left(
\frac{f_{1}}{K}g_{11}\right)  =\frac{f_{1}^{\prime}}{f_{1}}-\frac{K^{\prime}%
}{K}=-\frac{2}{t},\qquad\frac{f_{1}}{K}\thickapprox t^{-2},
\end{align}
and the trivial (by hypothesis) result
\begin{equation}
T_{\mu\nu}^{6}=Rg_{\mu\nu},\qquad L_{\mathcal{H}}\left(  Rg_{11}\right)
=\frac{R^{\prime}}{R}=-\frac{2}{t}\qquad R\thickapprox t^{-2},
\end{equation}
so this means that
\begin{equation}
\frac{\rho}{K}\thickapprox t^{-2},\qquad\frac{f_{T}}{K}\rho\thickapprox
t^{-2},\qquad\frac{f_{T}}{K}p\thickapprox t^{-2},\qquad\frac{f_{2}f_{3}}%
{K}\thickapprox t^{-2},\qquad\frac{f_{1}}{K}\thickapprox t^{-2},
\end{equation}
then%
\begin{equation}
f_{T}=f_{2}f_{3T}=const.,\qquad f_{1}\thickapprox f_{2}f_{3}\thickapprox
\rho\thickapprox p,\qquad f_{1R}\thickapprox f_{2R}f_{3}.
\end{equation}

Note that $\thickapprox$ means that the quantities have the same order of
magnitude. Therefore, we are only able to obtain relationships between the
physical quantities, but we do not know anything about their individual behavior.

We need to assume the following hypothesis (within the self-similar framework)%
\begin{equation}
\rho\thickapprox p\thickapprox T\thickapprox t^{-x},\text{\qquad}%
x\in\mathbb{R}^{+}%
\end{equation}
so that
\begin{equation}
f_{1}(R)=R^{n},\qquad f_{2}(R)=R^{m},\qquad f_{3}(T)=T^{l},\qquad
n,m,l\in\mathbb{R},
\end{equation}
where $R\thickapprox t^{-2}.$ Therefore%
\begin{equation}%
\begin{array}
[c]{ll}%
f_{2}f_{3T}=const.: & -2m-x(l-1)=0\\
f_{1R}\thickapprox f_{2R}f_{3}: & -2\left(  n-1\right)  =-2(m-1)-xl\\
f_{1}\thickapprox f_{2}f_{3}: & -2n=-2m-xl
\end{array}%
\begin{array}
[c]{c}%
x=\frac{2m}{1-l}\\
2n=2m+xl\\
2n=2m+xl
\end{array}
\end{equation}

Now, from $f_{1}\thickapprox\rho\thickapprox p,$ we obtain
\begin{equation}
f_{1}\thickapprox t^{-2n}\thickapprox\rho\thickapprox t^{-x},\qquad
\Longrightarrow\qquad x=2n
\end{equation}
and therefore%
\begin{equation}
x=\frac{2m}{1-l}=2n,\qquad\Longrightarrow\qquad2n=2m+xl\qquad\Longrightarrow
\qquad m=n(1-l)\qquad x=2n,
\end{equation}
or%
\begin{equation}
l=-\frac{1}{n}\left(  m-n\right)  .
\end{equation}

Hence the model reduces to%
\begin{align}
f(R,T)  &  =f_{1}(R)+f_{2}(R)f_{3}(T)=R^{n}+R^{n(1-l)}T^{l},\qquad
n,l\in\mathbb{R},\label{R2a}\\
f(R,T)  &  =f_{1}(R)+f_{2}(R)f_{3}(T)=R^{n}+R^{m}T^{1-\frac{m}{n}},\qquad
n,m\in\mathbb{R}, \label{R2b}%
\end{align}
where%
\begin{equation}
\rho\thickapprox p\thickapprox T\thickapprox t^{-2n},\qquad R\thickapprox
t^{-2},
\end{equation}
as it is required.
\end{proof}

\bigskip

In conclusion, the above Theorem states that if $f(R,T)=R^{n}+R^{n(1-l)}T^{l}%
$, then the FE admit self-similar solutions. For cosmological reasons, we
usually consider $n\geq1$ and $l\leq1$. Note that if we fix $l=1$ then the
model reduces to the $f(R,T)=f_{1}(R)+f_{2}(T)$ case. From the above Theorem,
we have the following corollaries.

\begin{corollary}
\label{Coro2}If $n=1,$ the model reduces to%
\begin{equation}
f(R,T)=f_{1}(R)+f_{2}(R)f_{3}(T)=R+R^{m}T^{l},\qquad/\qquad m+l=1.
\end{equation}

\end{corollary}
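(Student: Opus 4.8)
The plan is to obtain this result as an immediate specialization of \textbf{Theorem \ref{Teorema2}}, rather than re-running the full collineation analysis from scratch. Recall that \textbf{Theorem \ref{Teorema2}} establishes that the field equations (\ref{FEC2}) admit self-similar solutions precisely when
\begin{equation}
f(R,T)=R^{n}+R^{n(1-l)}T^{l},\qquad n,l\in\mathbb{R},
\end{equation}
or, equivalently, upon introducing $m=n(1-l)$, when $f(R,T)=R^{n}+R^{m}T^{1-\frac{m}{n}}$. So the first step I would take is simply to set $n=1$ in this general admissible form.

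Substituting $n=1$ into the first term yields $R^{n}=R$. For the second term, I would then evaluate the exponent of $R$, namely $n(1-l)$, at $n=1$, obtaining $1-l$; denoting this common exponent by $m$ gives the defining relation $m=1-l$. Since the exponent of $T$ remains $l$, the second term reads $R^{m}T^{l}$, and the function reduces to
\begin{equation}
f(R,T)=R+R^{m}T^{l}.
\end{equation}

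Finally, I would rewrite the constraint $m=1-l$ in the symmetric form $m+l=1$, which is exactly the relation stated in the corollary; equivalently, one checks that setting $n=1$ in the exponent $1-\frac{m}{n}$ returns $1-m=l$, so the two exponents are forced to satisfy $m+l=1$. The accompanying self-similar scalings are inherited directly from the theorem under the same substitution, giving $\rho\thickapprox p\thickapprox T\thickapprox t^{-2n}=t^{-2}$ and $R\thickapprox t^{-2}$.

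I do not anticipate any genuine obstacle here, since the argument is essentially a one-line substitution into an already-proven classification; the only point requiring a moment's care is the bookkeeping between the two equivalent parametrizations $(n,l)$ and $(n,m)$ employed in the proof of \textbf{Theorem \ref{Teorema2}}, so as to ensure the constraint is transcribed correctly as $m+l=1$ and not, say, as $m-l=1$.
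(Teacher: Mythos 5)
Your proposal is correct and matches the paper exactly: the paper presents Corollary \ref{Coro2} as an immediate specialization of Theorem \ref{Teorema2}, obtained by setting $n=1$ so that $m=n(1-l)=1-l$, i.e.\ $m+l=1$, with the inherited scalings $f_{1}=R\thickapprox f_{2}f_{3}\thickapprox\rho\thickapprox p\thickapprox T\thickapprox t^{-2}$. Your bookkeeping between the $(n,l)$ and $(n,m)$ parametrizations is also consistent with both forms (\ref{R2a}) and (\ref{R2b}) given in the paper.
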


Note that from the proof of the above Theorem, it is easy to see that
\begin{equation}
f_{2}f_{3T}=const.,\qquad f_{1}=R\thickapprox f_{2}f_{3}\thickapprox
\rho\thickapprox p\thickapprox T\thickapprox t^{-2}.
\end{equation}

\begin{corollary}
\label{Coro3}If $n=0,$ the model collapses to%
\begin{equation}
f(R,T)=f_{2}(R)f_{3}(T)=R^{m}T^{l},\qquad x\left(  1-l\right)  =2m.
\end{equation}

\end{corollary}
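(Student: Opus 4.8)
The plan is to specialize the argument of Theorem \ref{Teorema2} to the value $n=0$ and to track carefully which of the scaling relations derived there survive and which degenerate. First I would substitute $n=0$ into the power-law ansatz $f_{1}(R)=R^{n}$, $f_{2}(R)=R^{m}$, $f_{3}(T)=T^{l}$. This makes $f_{1}=R^{0}=1$ a constant, so that $f_{1R}=0$ and hence $K=f_{1R}+f_{2R}f_{3}=f_{2R}f_{3}$. The contribution of $f_{1}$ to the effective energy-momentum tensor then reduces to the constant piece inside $\frac{1}{2}(f_{1}-KR)g_{\mu\nu}$, and the whole function collapses to $f(R,T)=f_{2}(R)f_{3}(T)=R^{m}T^{l}$, which is the asserted reduced form.

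The key second step is to re-derive the surviving constraints on the exponents. In the proof of Theorem \ref{Teorema2} the exponent $x$ (with $\rho\thickapprox p\thickapprox T\thickapprox t^{-x}$) was pinned to $x=2n$ by the relation $f_{1}\thickapprox\rho$. When $n=0$ this relation disappears, since the now-constant $f_{1}$ can no longer be of the same order as the decaying density; therefore $x$ is promoted to a free parameter. Likewise the relation $f_{1R}\thickapprox f_{2R}f_{3}$ becomes vacuous because $f_{1R}=0$, so there is no longer a matching condition inside $K$. I would then check that the single nontrivial condition still coming from $L_{\mathcal{H}}(T_{\mu\nu}^{eff})=0$ is $f_{2}f_{3T}=\mathrm{const}$, which upon inserting the power laws and $R\thickapprox t^{-2}$ gives $-2m-x(l-1)=0$, that is $x(1-l)=2m$, exactly the stated relation.

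The last step is a consistency verification rather than a derivation of new information: I would confirm that the remaining self-similarity requirements, namely $\rho/K\thickapprox t^{-2}$ and $f_{2}f_{3}\thickapprox\rho$, reproduce the same constraint and do not over-determine the exponents. Using $K=f_{2R}f_{3}\thickapprox t^{-2(m-1)-xl}$ and $\rho\thickapprox t^{-x}$, the ratio $\rho/K$ scales as $t^{-x(1-l)+2m-2}$, which equals $t^{-2}$ precisely when $x(1-l)=2m$, confirming internal consistency.

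I expect the main obstacle to be bookkeeping rather than any deep difficulty: the delicate point is that several relations employed in Theorem \ref{Teorema2} degenerate when $n=0$, because the $f_{1}$ and $f_{1R}$ terms either vanish or become constant. One must therefore discard exactly those degenerate conditions and retain only $x(1-l)=2m$, which is precisely what frees $x$ from the value $2n=0$ that the argument of Theorem \ref{Teorema2} would otherwise have forced.
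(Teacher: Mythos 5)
Your proposal is correct and follows essentially the same route as the paper: the paper's own justification of Corollary \ref{Coro3} is simply that with $n=0$ the model reduces to $f_{2}(R)f_{3}(T)=R^{m}T^{l}$ and the only surviving condition is $f_{2}f_{3T}=\mathrm{const}$, which with $f_{2}\thickapprox t^{-2m}$ and $f_{3T}\thickapprox t^{-x(l-1)}$ yields $x\left(1-l\right)=2m$ while leaving $x$ free. Your treatment is somewhat more explicit than the paper's two-line remark — in particular you spell out why the relations $f_{1}\thickapprox\rho$ and $f_{1R}\thickapprox f_{2R}f_{3}$ degenerate (so that $x$ is no longer pinned to $2n$) and you verify consistency via $\rho/K\thickapprox t^{-2}$ — but this is the same argument, not a different one.
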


As we can easily observe, for this last case,%
\begin{equation}
f_{2}(R)=R^{m}=t^{-2m},\qquad f_{2}(T)=T^{l}=t^{-xl},\qquad m,l,x\in
\mathbb{R},
\end{equation}
where $\rho\thickapprox p\thickapprox T\thickapprox t^{-x},$ with the
constraint: $x\left(  1-l\right)  =2m.$ Thus, if we set $x=2,$ then $m+l=1,$
since $f_{2}f_{3T}=const.$

\section{Some exact solutions}

The above theoretical results only determine the behavior of the physical and
geometrical quantities. They state the order of magnitude of the different
quantities, but they do not say anything about the constants $a_{1},$
$\rho_{0}$, etc. To find an exact analytical solution that is also
cosmologically relevant, it is necessary to determine them in such a way that
we may make predictions about the behavior of the solution. As an example, in what follows, we find two exact solutions, and we explore their cosmological consequences.

In the first example we study the case $f(R,T)=f_{1}(R)+f_{2}(T)=R^{n}+\lambda
T,$ while in the second example we investigate the case $f(R,T)=f_{2}%
(R)f_{3}(T)=R^{m}T^{l}$.

\subsection{Case 1: $f(R,T)=f_{1}(R)+f_{2}(T)$}

In this case, the FE are:%
\begin{equation}
G_{\mu\nu}=\frac{1}{f_{1R}}\left(  \left(  8\pi+f_{2T}\right)  T_{\mu\nu
}+f_{2T}pg_{\mu\nu}+\frac{1}{2}f_{2}g_{\mu\nu}+\frac{1}{2}\left(  f_{1}%
-f_{1R}R\right)  g_{\mu\nu}-\left(  g_{\mu\nu}\square-\nabla_{\mu}\nabla_{\nu
}\right)  f_{1R}\right)  ,
\end{equation}
\begin{equation}
\left(  1+\frac{2f_{T}}{8\pi-f_{T}}\right)  \nabla^{\mu}T_{\mu\nu}=\frac
{f_{T}}{8\pi-f_{T}}\left[  \left(  -T_{\mu\nu}-pg_{\mu\nu}\right)  \nabla
^{\mu}\ln f_{T}-\nabla^{\mu}\left(  pg_{\mu\nu}\right)  -\frac{1}{2}g_{\mu\nu
}\nabla^{\mu}T\right]  ,
\end{equation}
and we may reformulate them as an effective Einstein field equations of the form%
\begin{equation}
G_{\mu\nu}=G_{\mathrm{eff}}T_{\mu\nu}+T_{\mu\nu}^{\mathrm{eff}},\qquad
G_{\mathrm{eff}}=\frac{1}{f_{1R}}\left(  1+f_{2T}\right)  ,
\end{equation}
where $G_{\mathrm{eff}}$ is the effective gravitational coupling. In fact, it
is a time varying coupling, while $T_{\mu\nu}^{\mathrm{eff}}$ is defined as%
\begin{equation}
T_{\mu\nu}^{\mathrm{eff}}=\frac{1}{f_{1R}}\left(  \frac{1}{2}\left(
2f_{2T}p+f_{2}+\left(  f_{1}-f_{1R}R\right)  \right)  g_{\mu\nu}-\left(
g_{\mu\nu}\square-\nabla_{\mu}\nabla_{\nu}\right)  f_{1R}\right)  .
\end{equation}

Under the above assumptions about the matter field, and taking into account
the previous results (\ref{R1}), that is:%
\begin{equation}
f(R,T)=f_{1}(R)+f_{2}(T)=R^{n}+\lambda T,
\end{equation}
where $R=R_{0}t^{-2},$ $a=t^{a_{1}},$ $a_{1}\in\mathbb{R}^{+},$ and
$p=\omega\rho=p_{0}t^{-2n},$ $T=-\rho+3p=T_{0}t^{-2n},$ with $n,\lambda
\in\mathbb{R},$ then, the FE and the conservation equation for a flat FLRW
metric read:%
\begin{align}
3f_{1R}H^{2}  &  =\left(  1+f_{2T}\right)  \rho-f_{2T}p-\frac{1}{2}%
f_{2}-3f_{1R}^{\prime}H+\frac{1}{2}\left(  f_{1R}R-f_{1}\right)
,\label{c1FE1}\\
f_{1R}\left(  2H^{\prime}+3H^{2}\right)   &  =-\left(  1+2f_{2T}\right)
p-\frac{1}{2}f_{2}-f_{1R}^{\prime\prime}-2f_{1R}^{\prime}H+\frac{1}{2}\left(
f_{1R}R-f_{1}\right)  , \label{c1FE2}%
\end{align}%
\begin{equation}
\left[  1+\frac{1}{2}\left(  3-5\omega\right)  f_{2T}\right]  \rho^{\prime
}+\left[  3(1+f_{2T})\left(  \omega+1\right)  H+\left(  1-\omega\right)
f_{2T}^{\prime}\right]  \rho=0. \label{c1FE3}%
\end{equation}

Starting from the results (\ref{R1}), we have $f_{1R}=nR^{n-1}$ ;
$f_{1R}^{\prime}=-2n(n-1)R_{0}^{n-1}t^{-2n+1}$ and $f_{1R}^{\prime\prime
}=2n(n-1)(2n-1)R_{0}^{n-1}t^{-2n}$.

Additionally, from (\ref{R1}) we have the relation $T=T_{0}t^{-2n}$ , with
$T_{0}=\rho_{0}(3\omega-1).$ By inserting these expressions into Eqs.
(\ref{c1FE1}-\ref{c1FE3}) we obtain%
\begin{equation}
\rho_{0}\left[  1+\frac{\lambda}{2}(3-5\omega)\right]  =A,\qquad a_{1}%
=\frac{n\left(  2-\lambda\left(  5\omega-3\right)  \right)  }{3\left(
\omega+1\right)  \left(  \lambda+1\right)  }, \label{Eq18}%
\end{equation}
$\allowbreak$where%
\begin{equation}
A\equiv R_{0}^{n-1}\left(  3na_{1}^{2}+\frac{1}{2}(1-n)\left(  R_{0}%
+12na_{1}\right)  \right)  ,\qquad R_{0}=6a_{1}(2a_{1}-1),
\end{equation}
and therefore%
\begin{equation}
G_{\mathrm{eff}}=\frac{\left(  1+\lambda\right)  }{nR_{0}^{n-1}}t^{2\left(
n-1\right)  }.
\end{equation}

If $\lambda=0$ and $n=1,$ from Eq. (\ref{Eq18}) one recovers the Standard
$\Lambda$CDM Model relations%
\begin{equation}
a_{1}=\frac{2}{3\left(  \omega+1\right)  },\qquad\rho_{0}=3a_{1}^{2},\qquad
G_{\mathrm{eff}}=const.
\end{equation}

In order to study the solution described in Eq. (\ref{Eq18}), that is, to
determine the range of the parameters $\left(  \omega,n,\lambda\right)  ,$ we
have performed a numerical analysis for two cases.

In Case 1, we have fixed $n=1$ (the Standard Model) and in Case 2, we have
fixed $n=2.$ It is possible to extend this analysis to any value of $n,$ that
is, $n=3/2,$ $n=3,$ etc.

\textbf{Case 1a}, $n=1.$ To estimate the $\lambda$-effect on Standard Model,
we focus to the case $n=1$. The physical quantities behave as%
\begin{equation}
a=t^{a_{1}},\qquad a_{1}=\frac{\left(  2-\lambda\left(  5\omega-3\right)
\right)  }{3\left(  \omega+1\right)  \left(  \lambda+1\right)  },\qquad
\rho=\rho_{0}t^{-2},\qquad\rho_{0}=\frac{3a_{1}^{2}}{1+\frac{\lambda}%
{2}(3-5\omega)},
\end{equation}
while the effective gravitational constant becomes a constant $G_{\mathrm{eff}%
}=const,$ but under the restriction $\lambda>-1.$

\begin{figure}[h]
\begin{center}
\includegraphics[height=1.5in,width=1.5in]{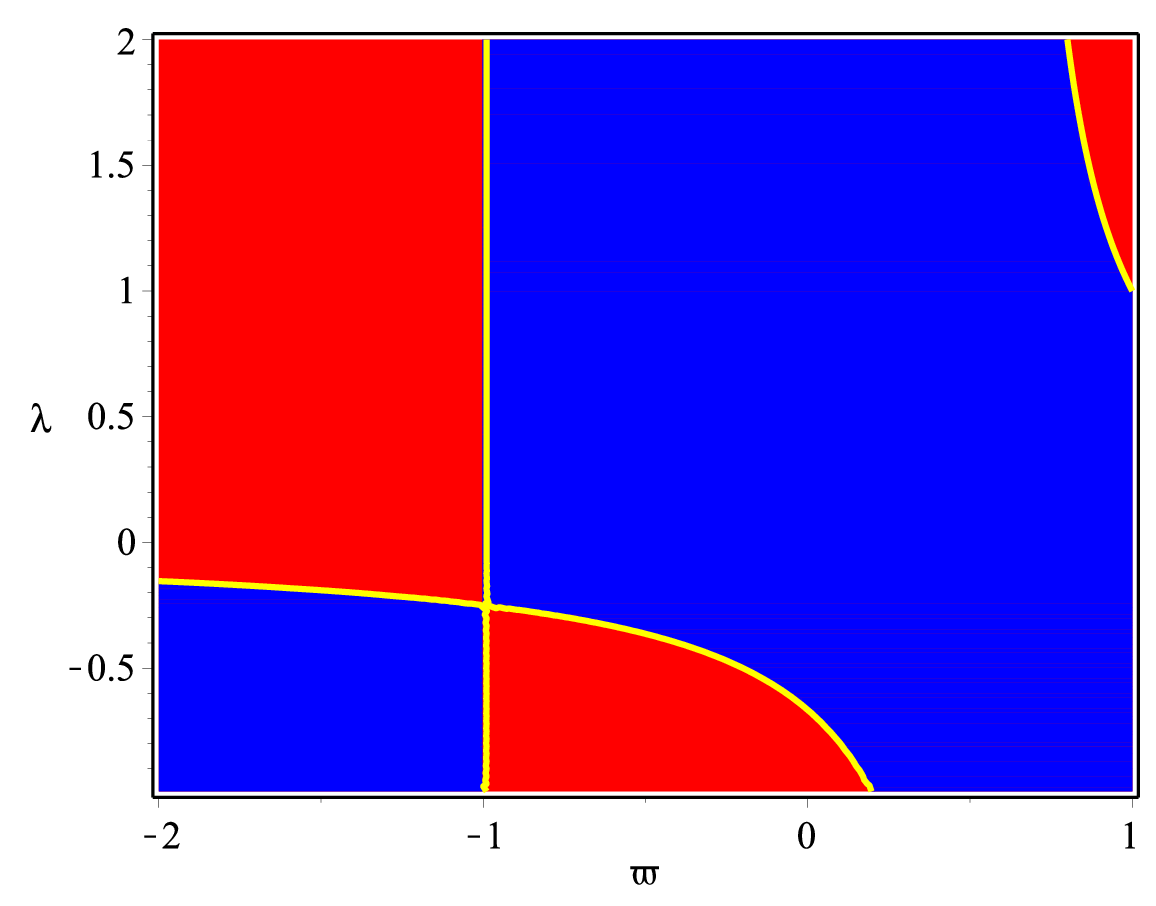}
\includegraphics[height=1.5in,width=1.5in]{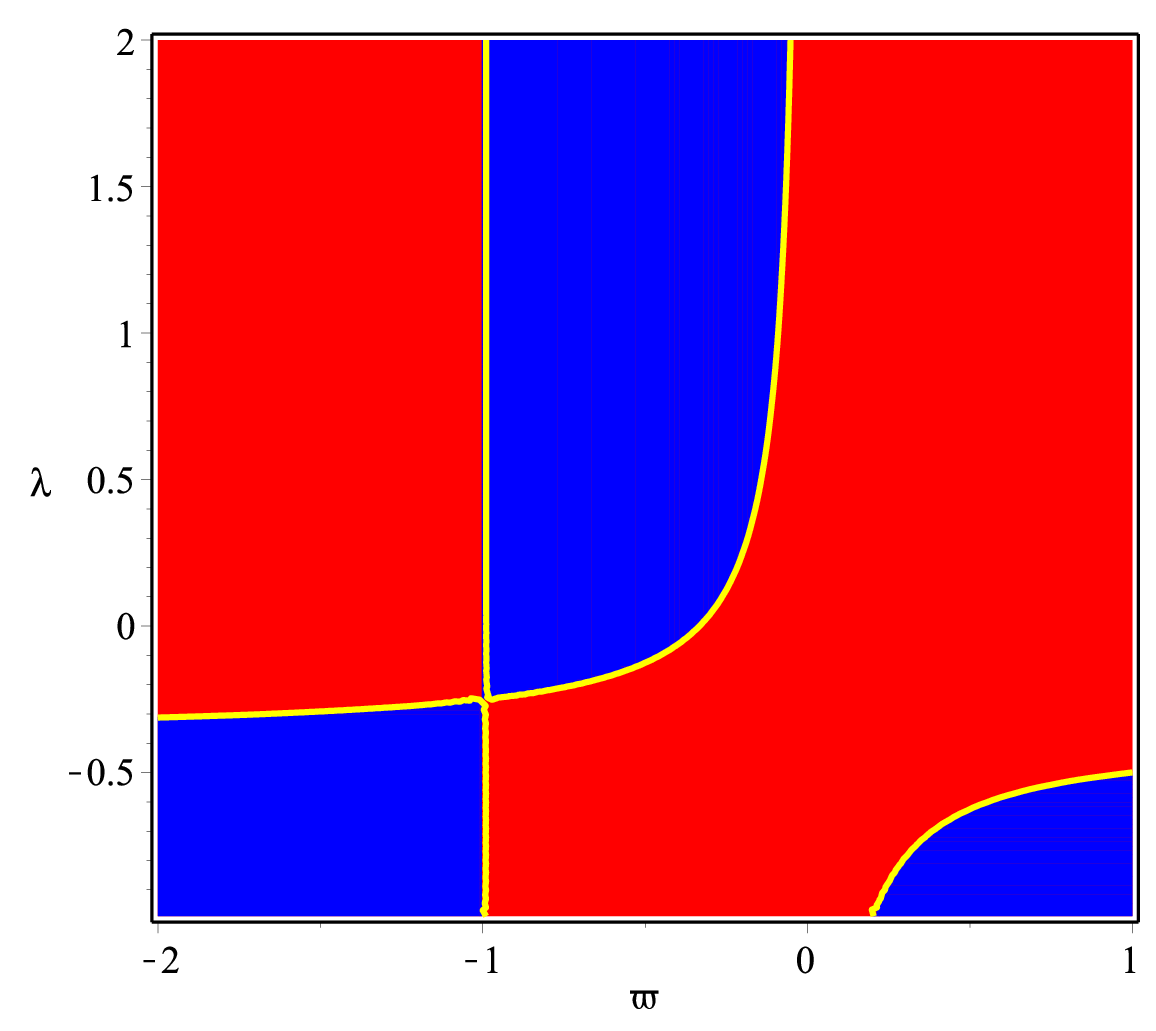}
\includegraphics[height=1.5in,width=1.5in]{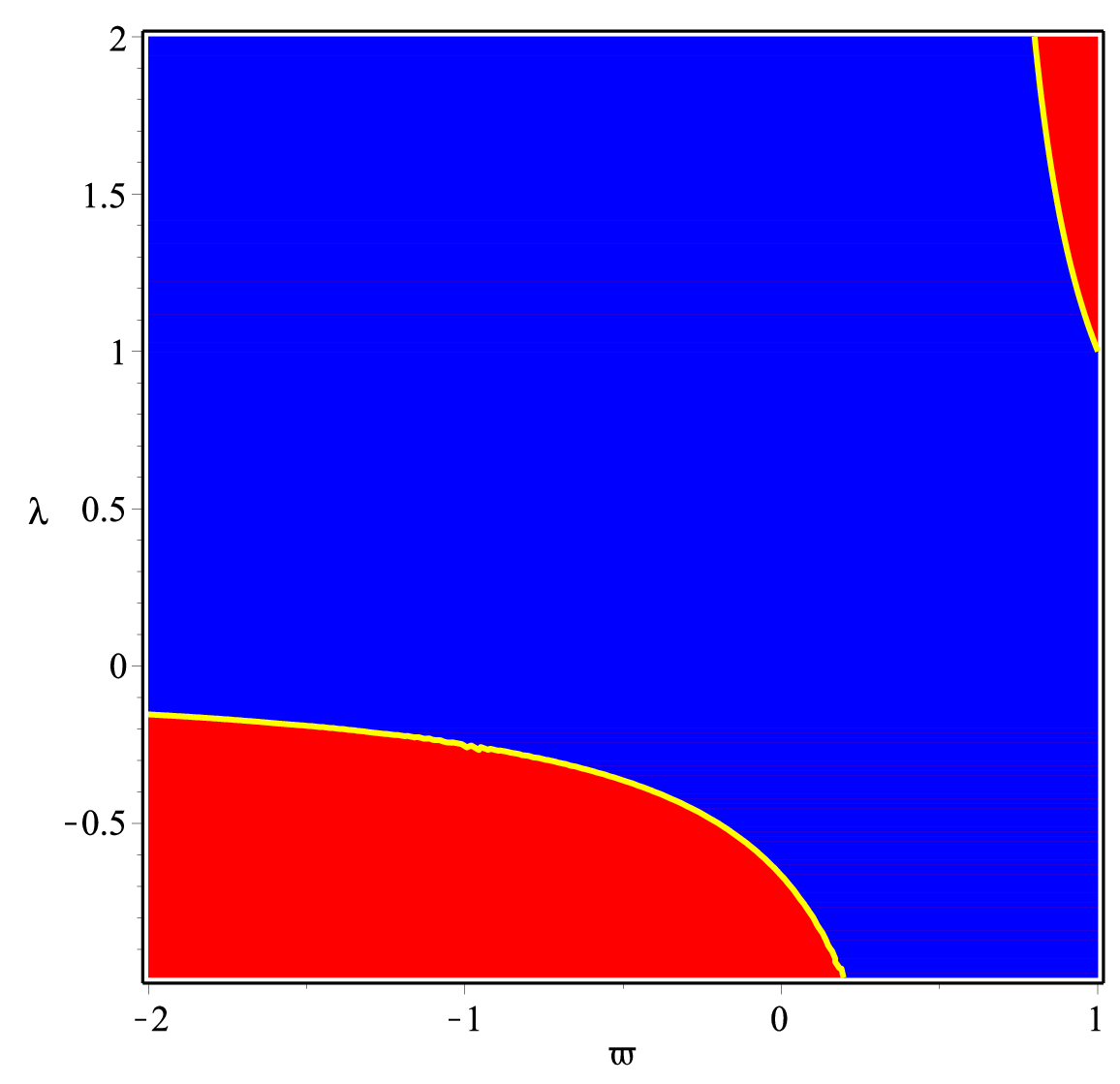}
\end{center}
\caption{Case 1a $n=1.$ Plots of $a_{1}$ (blue area $a_{1}>0,$ red color
$a_{1}\leq0$). Second figure, we show the region where $a_{1}>1$ (blue color)
red color means $a_{1}\leq1.$ Last figure, $\rho_{0}.$ Blue color $\rho_{0}%
>0$, while red color $\rho_{0}\leq0.$}%
\label{Case1}%
\end{figure}

In Figs.~\ref{Case1} we have performed a numerical analysis of the quantities
$a_{1}$ and $\rho_{0},$ by plotting these quantities in the plane $\left(
\omega,\lambda\right)  .$ We have fixed $\omega\in\left[  -2,1\right]  $
$\ $and $\lambda\in\left[  -2,2\right]  $, noting that the analysis may be
enlarged by considering $\lambda\in\left[  -m,m\right]  ,$ $m\in\mathbb{R},$
since, a priori, there is no any restriction on this numerical constant.

We have started the analysis by plotting the area within the region $\left(
\mathcal{R}=\left[  -2,1\right]  \times\left[  -2,2\right]  \right)  $\ where
$a_{1}>0$ (blue color) while red color means $a_{1}\leq0.$ In the mid
sub-panel, we have calculated the region where $a_{1}>1$ (blue color), that
is, we have calculated those values of $\omega$ and $\lambda$ for which the
solution represents an accelerated expansion (red color means $a_{1}\leq1$).
In the right sub-panel, we have calculated the values of $\left(
\omega,\lambda\right)  $ for which $\rho_{0}>0$ (blue color) while, as in the
above pictures, red color means $\rho_{0}\leq0.$

We may conclude, roughly speaking, that the solution has physical meaning if
$\omega\in(-1,1]$ and $\lambda\in\left[  -2,2\right]  .$ Note that $a_{1}$
tends to infinity when $\omega\rightarrow1,$, and $\lambda\rightarrow-1.$ If
$\omega\in(-1,0]$, then the solution accelerates. Only by taking into account
the condition $G_{\mathrm{eff}}=const>0,$ we may find the following constraint
on $\lambda:$ $\lambda>-1.$ Thus, our solution works well in the region
$\mathcal{R}=(-1,1]\times(-1,2].$ From the observations, the deceleration
parameter $q$ is measured as $q=-0.18_{-0.12}^{+0.12},$ thus this means that
$a_{1}\in\left[  1.064,1.429\right]  .$

\textbf{Case 1b}, $n=2.$ In order to estimate the \emph{curvature}-effect on
the Standard Model, we focus on the case $n=2$. The physical quantities
behaves as%
\begin{equation}
a=t^{a_{1}},\qquad a_{1}=\frac{2\left(  2-\lambda\left(  5\omega-3\right)
\right)  }{3\left(  \omega+1\right)  \left(  \lambda+1\right)  },\qquad
\rho=\rho_{0}t^{-4},\qquad\rho_{0}=\frac{R_{0}\left(  12a_{1}^{2}-\left(
R_{0}+24a_{1}\right)  \right)  }{2+\lambda(3-5\omega)},
\end{equation}
while the effective gravitational constant behaves as%
\begin{equation}
G_{\mathrm{eff}}=\frac{\left(  1+\lambda\right)  }{2R_{0}}t^{2}=\frac{3\left(
1+\lambda\right)  ^{3}\left(  \omega+1\right)  ^{2}}{8\left(  \lambda\left(
5\omega-3\right)  -2\right)  \left(  3\omega-5+\lambda\left(  23\omega
-9\right)  \right)  }t^{2},
\end{equation}
so it is a positive function, if $\lambda>-1$. Note that for these numerical
values it is an increasing function.

\begin{figure}[h]
\begin{center}
\includegraphics[height=1.5in,width=1.5in]{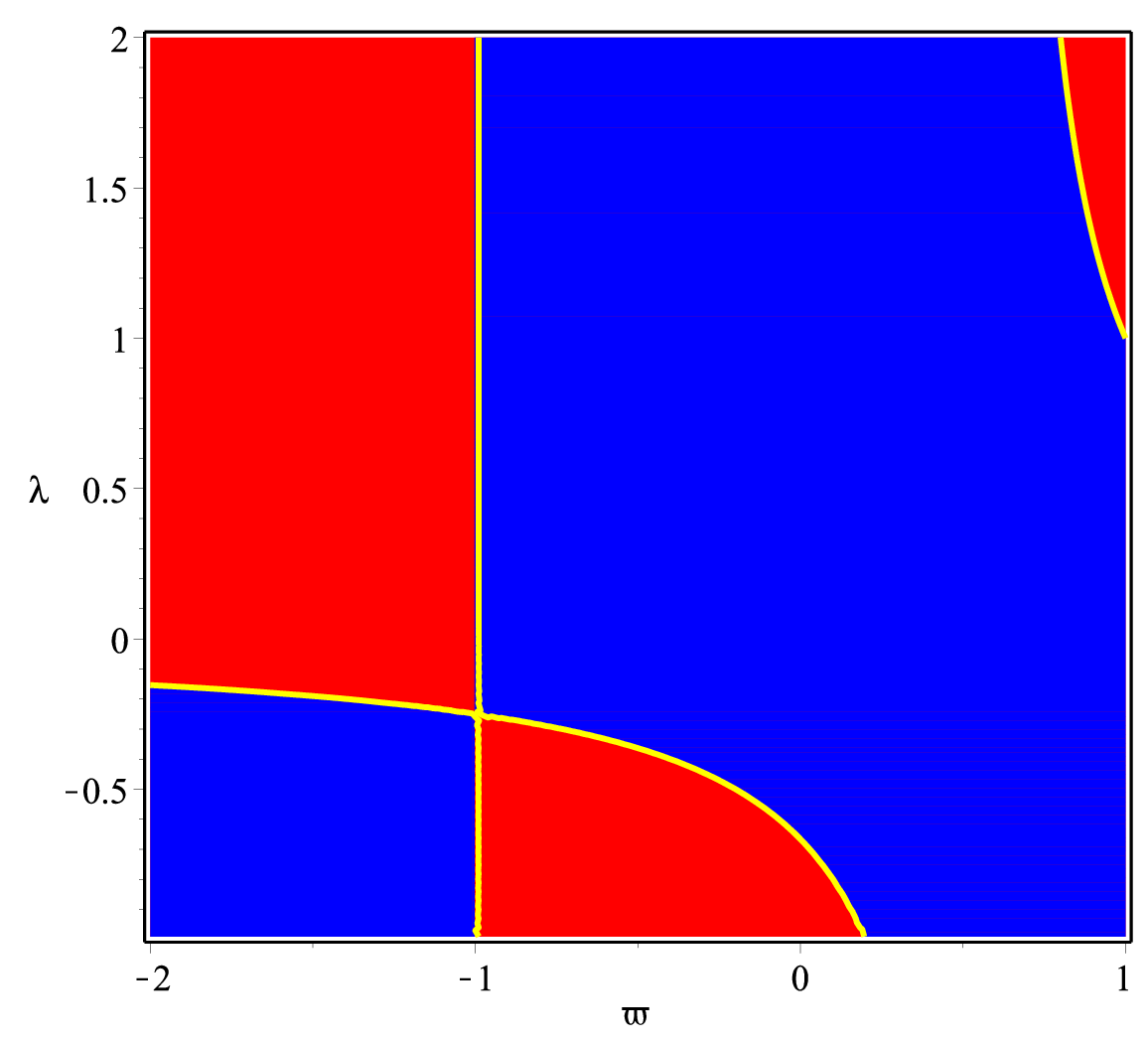}
\includegraphics[height=1.5in,width=1.5in]{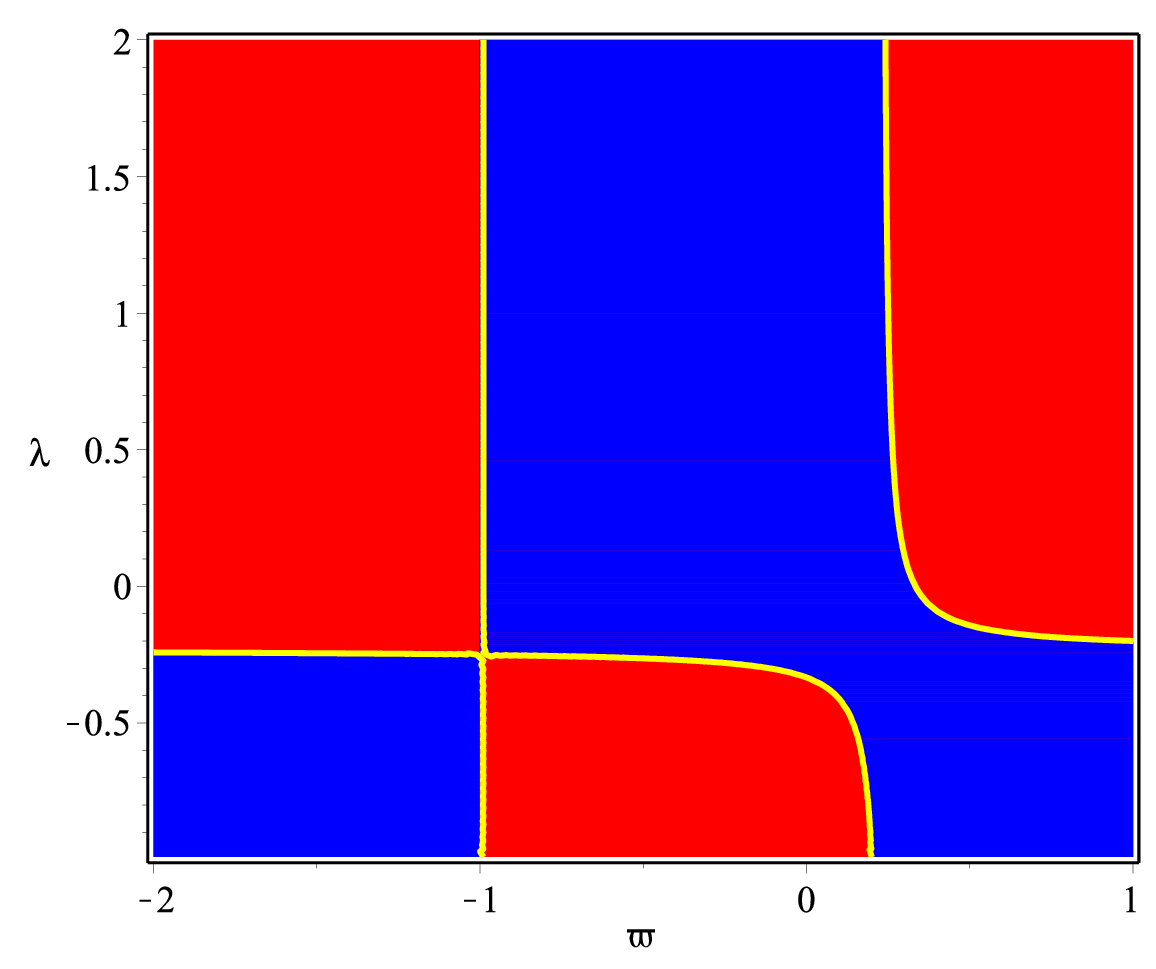}
\includegraphics[height=1.5in,width=1.5in]{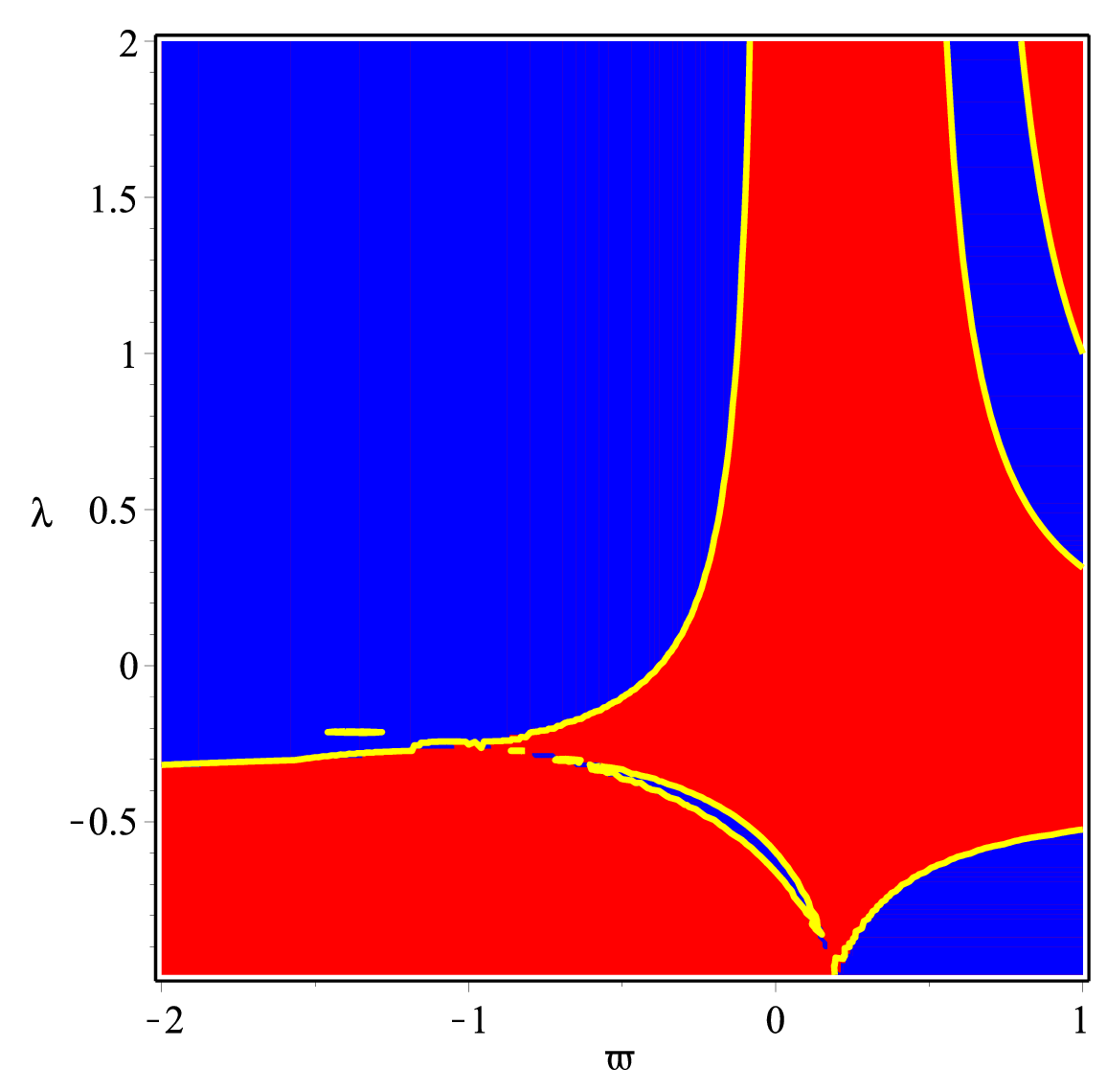}
\end{center}
\caption{Case 1b $n=2.$ Plots of $a_{1}$ (blue area $a_{1}>0,$ red color
$a_{1}\leq0$). Second figure, we show the region where $a_{1}>1$ (blue color)
red color means $a_{1}\leq1.$ Last figure, $\rho_{0}.$ Blue color $\rho_{0}%
>0$, while red color $\rho_{0}\leq0.$}%
\label{Case2}%
\end{figure}

By carrying out a similar analysis as in the previous case, we have fixed a
region $\mathcal{R}$ in the plane $\left(  \omega,\lambda\right)  ,$ with
$\mathcal{R}=\left[  -2,1\right]  \times\left[  -2,2\right]  .$ Thus, in
Figs.~(\ref{Case2}), we have calculated the area within the region
$\mathcal{R}$ where the exponent of the scale factor, $a_{1}$ is positive
(left sub panel) and where it is $a_{1}>1$ (mid sub panel). In the right
sub-panel of Fig. (\ref{Case2}) we have plotted the area where $\rho_{0}>0$
(blue color) while, as in the above pictures, red color means $\rho_{0}\leq0.$

As we can see from this last picture, basically $\omega\in(-1,0]$ where
furthermore $a_{1}>1,$ and $\lambda\in(-1,2],$ in such a way that in this
region, $\mathcal{R}=(-1,0]\times(-1,2],$ $G_{\mathrm{eff}}=const>0.$

\subsection{Case 2: $f(R,T)=f_{1}(R)f_{2}(T)$}

For this model the FE reduce to%
\begin{equation}
G_{\mu\nu}=G_{\mathrm{eff}}T_{\mu\nu}+T_{\mu\nu}^{\mathrm{eff}},
\end{equation}%
\begin{equation}
G_{\mu\nu}=\frac{1}{K}\left(  \left(  8\pi+f_{T}\right)  T_{\mu\nu}%
+f_{T}pg_{\mu\nu}+\frac{1}{2}\left(  f-KR\right)  g_{\mu\nu}-\left(  g_{\mu
\nu}\square-\nabla_{\mu}\nabla_{\nu}\right)  K\right)  ,
\end{equation}
with $f_{R}=f_{1R}f_{2}=K,$ and $f_{T}=f_{2T}f_{1}$, where%
\begin{equation}
G_{\mathrm{eff}}=\frac{8\pi+f_{2T}f_{1}}{f_{1R}f_{2}}.
\end{equation}

The conservation equation becomes%
\begin{equation}
\left(  1+2C\right)  \nabla^{\mu}T_{\mu\nu}=-C\left[  \left(  T_{\mu\nu
}+pg_{\mu\nu}\right)  \nabla^{\mu}\ln\left(  f_{1}f_{2T}\right)  +\frac{5}%
{2}g_{\mu\nu}\nabla^{\mu}p-\frac{1}{2}g_{\mu\nu}\nabla^{\mu}\rho\right]  ,
\end{equation}
where $C=\frac{f_{1}f_{2T}}{1-f_{1}f_{2T}}.$

For a flat FLRW metric, the FE and the conservation equation read%
\begin{align}
3KH^{2}  &  =\rho\left(  1+f_{1}f_{2T}\right)  -pf_{1}f_{2T}-3K^{\prime
}H+\frac{1}{2}\left(  KR-f_{1}f_{2}\right)  ,\label{case21}\\
K\left(  2H^{\prime}+3H^{2}\right)   &  =-p\left(  1+2f_{1}f_{2T}\right)
-K^{\prime\prime}-2K^{\prime}H+\frac{1}{2}\left(  KR-f_{1}f_{2}\right)  ,
\label{case22}%
\end{align}%
\begin{equation}
\left(  1+\frac{5}{2}\left(  1-\omega\right)  C\right)  \rho^{\prime}=\left[
C\left(  \omega-1\right)  \left(  \frac{f_{2T}^{\prime}}{f_{2T}}+\frac
{f_{1}^{\prime}}{f_{1}}\right)  -3\left(  \omega+1\right)  H\left(
1+2C\right)  \right]  \rho. \label{case23}%
\end{equation}

By taking into account the previous results (\ref{R2a}),
\begin{equation}
f(R,T)=f_{1}(R)f_{2}(T)=\lambda R^{m}T^{l},\qquad2m=x\left(  1-l\right)  ,
\end{equation}
where%
\begin{equation}
f_{1}(R)=R^{m}=R_{0}t^{-2m},\qquad f_{2}(T)=\lambda T^{l}=\lambda T_{0}%
t^{-xl},\qquad\lambda,m,l,x\in\mathbb{R},
\end{equation}
and recalling that $\rho\thickapprox p\thickapprox T\thickapprox t^{-x},$
$\left(  T_{0}=\rho_{0}(3\omega-1)\right)  $, we obtain
\begin{equation}
a=t^{a_{1}},\qquad a_{1}\in\mathbb{R}^{+},\qquad R=6\left(  \frac
{a^{\prime\prime}}{a}+H^{2}\right)  =6a_{1}(2a_{1}-1)t^{-2}=R_{0}t^{-2}.
\end{equation}

Note that if we set $x=2,$ then $m+l=1,$ since $f_{1}f_{2T}=const,$ the
Standard Model is recovered if we set: $m=1$ and $l=0$ $\left(  \lambda
=1\right)  .$

By solving Eqs. (\ref{case21}-\ref{case23}) we find
\begin{equation}
\rho_{0}=\left[  6a_{1}(2a_{1}-1)\right]  ^{\frac{x}{2}}\left(  \frac
{m\lambda(3\omega-1)^{1-\frac{2m}{x}}}{6a_{1}(2a_{1}-1)(1+\omega)}\left[
2a_{1}+(2-x)(a_{1}+x-1)\right]  -\frac{\lambda(1-\frac{2m}{x})}{(3\omega
-1)^{\frac{2m}{x}}}\right)  ^{\frac{x}{2m}}, \label{r}%
\end{equation}
and
\begin{equation}
m=\frac{\frac{\omega}{3\omega-1}+\frac{1}{2}}{B+\frac{2\omega}{x(3\omega
-1)}+\frac{1}{2}}, \label{m}%
\end{equation}
with
\begin{equation}
B\equiv\frac{a_{1}\left(  2-3a_{1}(1+\omega)\right)  +(2-x)\left(
x-1-a_{1}(2+3\omega)\right)  }{6a_{1}(2a_{1}-1)(1+\omega)}.
\end{equation}

We may obtain an expression for $a_{1}$ from Eq. (\ref{m}). It must satisfy
the following quadratic equation
\begin{equation}
c_{1}a_{1}^{2}+c_{2}a_{1}+c_{3}=0, \label{a1}%
\end{equation}
where%
\begin{align}
c_{1}  &  =3\left(  1-\frac{2}{m}\right)  -\frac{12\omega}{x\left(
3\omega-1\right)  }\left(  \frac{x}{m}-2\right)  ,\\
c_{2}  &  =\frac{2+\left(  x-2\right)  \left(  2+3\omega\right)  }{1+\omega
}+\frac{6\omega}{x\left(  3\omega-1\right)  }\left(  \frac{x}{m}-2\right)
+3\left(  \frac{1}{m}-1\right)  ,\\
c_{3}  &  =\frac{\left(  x-2\right)  \left(  x-1\right)  }{1+\omega},
\end{align}
so we obtain two solutions for $a_{1}=a_{1\pm}.$ Note that, in the Standard
Model case, the above Eqs. are reduced to%
\begin{equation}
\rho_{0}=\frac{2a_{1}}{1+\omega},\qquad m=1,\qquad a_{1}=\frac{2}{3\left(
1+\omega\right)  }.
\end{equation}

Therefore, we have solved completely our proposed model. The constants
$\rho_{0}$ and $a_{1}$ depend on $\left(  m,x,\omega\right)  $ with the
relationship $2m=x\left(  1-l\right)  .$ Since $m$ and $x$ are positive, then
$l<1.$ Thus, our model is described for the function
\begin{equation}
f(R,T)=f_{1}(R)f_{2}(T)=\lambda R^{m}T^{l},\qquad l<1,\qquad2m=x\left(
1-l\right)  ,\qquad\lambda\in\mathbb{R}.
\end{equation}

We may study the behavior of this solution in several ways. For example,
by fixing $m$ so the constants $\rho_{0}$ and $a_{1}$ depend on $\left(
x,\omega\right)  $ or by fixing $l.$

We have explored the case $l=-1$ which implies that $m=x$ $\left(
f=R^{x}T^{-1}\right)  ,$ in such a way that Eqs. (\ref{r} and \ref{a1}) are
reduced to%
\begin{align*}
\rho_{0}  &  =\left[  6a_{1}(2a_{1}-1)\right]  ^{\frac{x}{2}}\left(
\frac{x\lambda\left(  2a_{1}+(2-x)(a_{1}+x-1)\right)  }{6a_{1}(2a_{1}%
-1)(1+\omega)(3\omega-1)}+\frac{\lambda}{(3\omega-1)^{2}}\right)  ^{\frac
{1}{2}}\\
c_{1}a_{1}^{2}+c_{2}a_{1}+c_{3}  &  =0\Longrightarrow a_{1}=a_{1\pm}%
\end{align*}
$\allowbreak$where%
\begin{align}
c_{1}  &  =3\left(  1-\frac{2}{x}\right)  +\frac{12\omega}{x\left(
3\omega-1\right)  },\qquad c_{3}=\frac{\left(  x-2\right)  \left(  x-1\right)
}{1+\omega},\\
c_{2}  &  =\frac{2+\left(  x-2\right)  \left(  2+3\omega\right)  }{1+\omega
}-\frac{6\omega}{x\left(  3\omega-1\right)  }+3\left(  \frac{1}{x}-1\right)  ,
\end{align}
and%
\begin{equation}
G_{\mathrm{eff}}=\frac{1+f_{T}}{f_{R}}=\left(  \frac{R_{0}^{1-x}}{x\lambda
T_{0}}\left(  T_{0}^{2}-\lambda R_{0}^{x}\right)  \right)  t^{x-2}%
=G_{\mathrm{eff}_{0}}t^{x-2}.
\end{equation}

To analyse this solution we have carried out a numerical study, as in the
above cases. By fixing a region $\mathcal{R}$ in the plane $\left(
\omega,x\right)  ,$ with $\mathcal{R}=(-1,1]\times\left[  1,3\right]  ,$ then
we may see where the solution has a physical meaning. Thus, in Figs.~(\ref{Case3})
we have calculated the area within the region $\mathcal{R}$ where the exponent
of the scale factor, $a_{1}$ is positive (left sub panel) and where it is
$a_{1}>1$ (mid sub panel).

In the right sub-panel of Fig.~(\ref{Case3}) we have plotted the area where
$\rho_{0}>0$ (blue color) while, as in the above pictures, red color means
$\rho_{0}\leq0.$ As we can see from this last picture (right sub panel),
basically, the parameter $\rho_{0}$ is not positive for all $\omega\in(-1,0],$
thus, this solution looks restrictive. With regard to the behavior of
$G_{\mathrm{eff}},$ we may say that it behaves as an increasing time function
if $x>2$, constant if $x=2$ and a decreasing function if $x<2.$

\begin{figure}[h]
\begin{center}
\includegraphics[height=1.5in,width=1.5in]{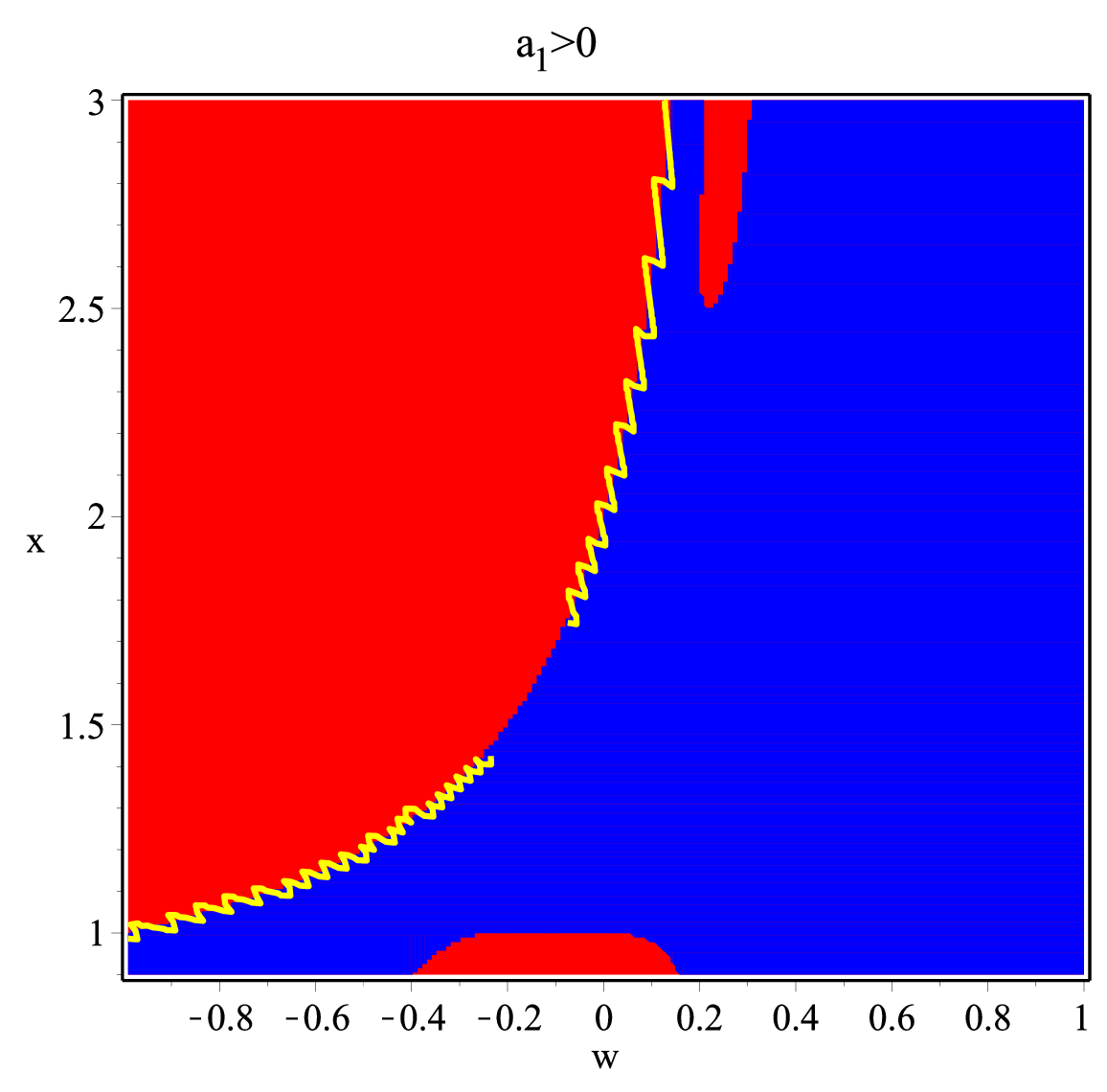}
\includegraphics[height=1.5in,width=1.5in]{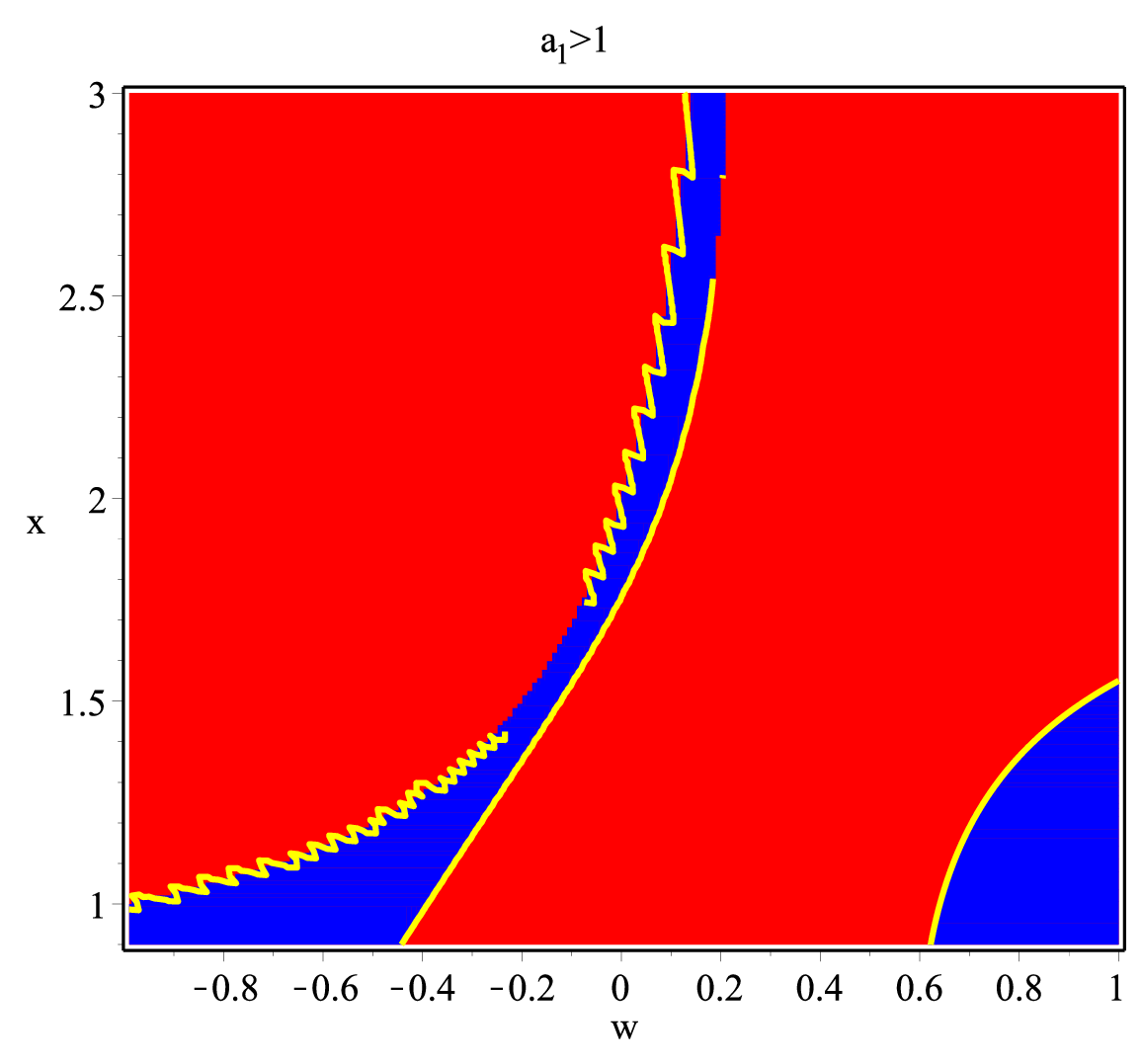}
\includegraphics[height=1.5in,width=1.5in]{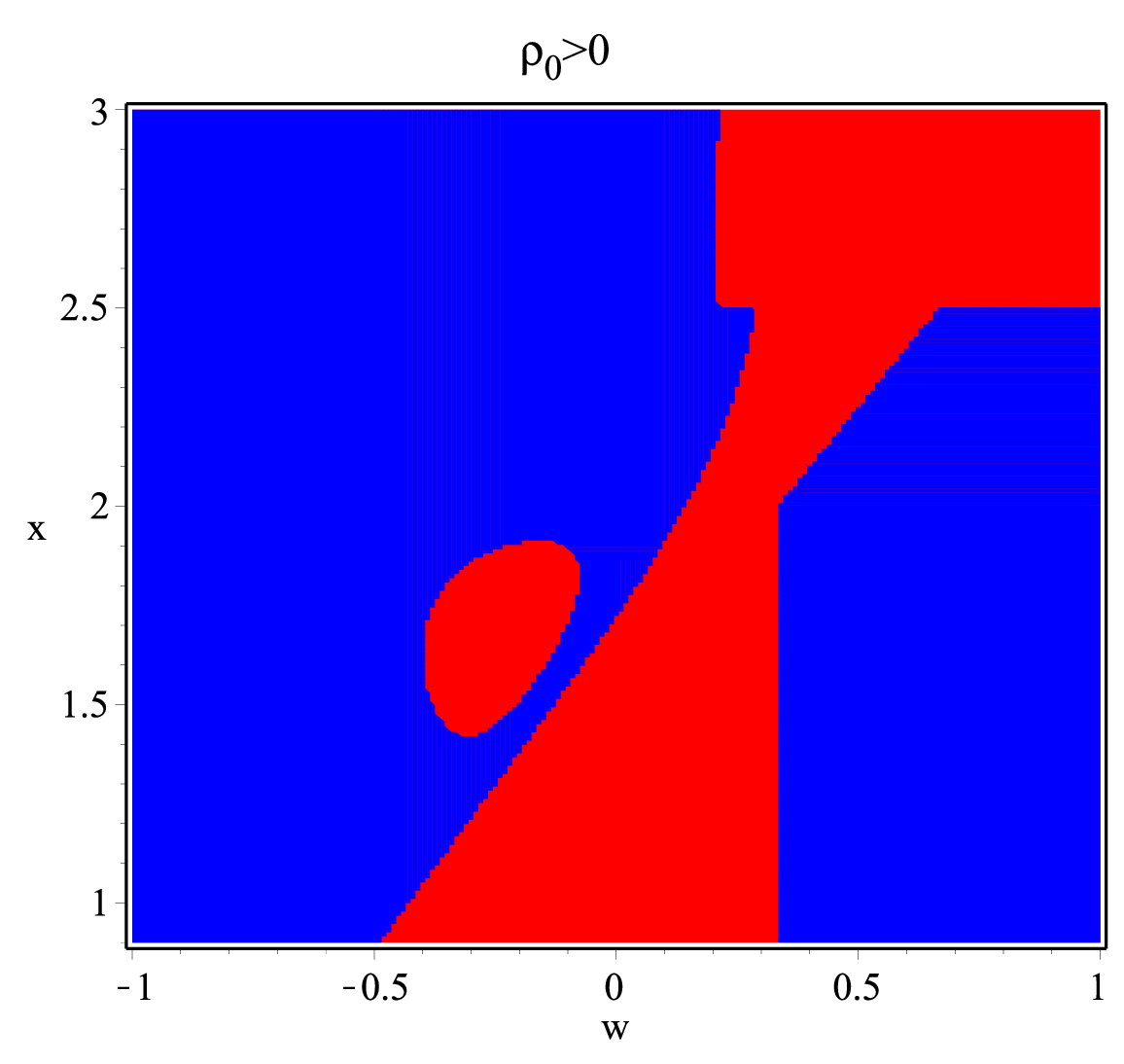}
\end{center}
\caption{Case 2 $f(R,T)=f_{1}(R)f_{2}(T)$. In the left sub panel, we plotted
$a_{1}$ (blue area $a_{1}>0,$ red color $a_{1}\leq0$). In the mid sub panel,
we show the region where $a_{1}>1$ (blue color), red color means $a_{1}\leq1.$
Last figure, $\rho_{0}.$ Blue color $\rho_{0}>0$, while red color $\rho
_{0}\leq0.$}%
\label{Case3}%
\end{figure}

As we have mentioned above, it is possible to analyse the solution
(\ref{r}, \ref{m} and \ref{a1}) in several ways. This could be achieved by
following a different approach. For example, fixing $x=2$, leads to
$m=(1-l)$, and this could result in a less restrictive solution.

\section{Conclusions}

In the present paper, we have explored whether the gravitational model $f(R,T)$
admits self-similar solutions, or not. We have considered only two cases for
the function $f(R,T).$ The first studied case corresponds to $f(R,T)=f_{1}%
(R)+f_{2}(T)$ while the second one is $f(R,T)=f_{1}(R)+f_{2}(R)f_{3}(T).$ In
order to determine if these models admit self similar solutions, we have used
the geometrical method of the matter collineations, by considering that the
matter collineation vector field is the homothetic one. This approach allows
us to determine the exact form of the unknown functions $f_{1}(R),f_{2}(T),$
as well as the rest of the physical and geometrical quantities.

For simplicity, we have considered a perfect fluid, and we have determined the
behavior of the energy density $\rho.$ The obtained Theorems are quite
general, and valid not only for the FLRW metric, but also valid for all the
self-similar Bianchi models and the Kantowski-Sachs one.

In the first of the studied cases, with $f(R,T)=f_{1}(R)+f_{2}(T),$ we have
been able to determine that the FE admit self-similar solutions if $f_{1}=R^{n},$ and
$f_{2}=\lambda T,$ with $n,\lambda\in\mathbb{R},$ therefore $f(R,T)=R^{n}%
+\lambda T$ (see \ref{Teorema1}). This new result generalizes the model
proposed by Harko et al. in \cite{H1}. In the Corollary (\ref{Colo1}), we show
that the model $f(R,T)=R+\lambda T$, admits self-similar solutions as well.

In the second of the studied cases: $f(R,T)=f_{1}(R)+f_{2}(R)f_{3}(T),$ we
have stated in Theorem (\ref{Teorema2}) that the FE admit self-similar
solutions, if $f(R,T)=R^{n}+R^{n(1-l)}T^{l}$. As we have pointed out, if we
fix $l=1,$ then the model reduces to the first studied case, that is, the case
$f(R,T)=f_{1}(R)+f_{2}(T)$. From the Theorem (\ref{Teorema2}) we have the
Corollaries (\ref{Coro2} and \ref{Coro3}), which allow us to obtain two
particular cases: $f(R,T)=R+R^{m}T^{l},$ such that $m+l=1$ and $f(R,T)=R^{m}%
T^{l},$ under the constraint $x\left(  1-l\right)  =2m.$ All these results are
also new, leading to novel classes of cosmological models.

Furthermore, we have shown how to calculate the exact solution in two
particular cases, $f(R,T)=R^{n}+\lambda T$, and $f(R,T)=R^{m}T^{l}$, respectively.

The determinations by the Planck satellite of the Cosmic Microwave Background Radiation temperature fluctuations \cite{1g,1h}, together with the observations of the light curves of the distant supernovae \cite{Riess} have provided a powerful confirmation of the amazing fact that the Universe is in a phase experiencing a de Sitter type accelerating expansion. Moreover, other amazing observational result and that the matter composition of the Universe consists of only 5\% baryonic matter, while 95\% of matter-energy is represented by two mysterious forms of energy/matter, called dark energy and dark matter, respectively. To find an explanation for the cosmological observational results, the $\Lambda$CDM paradigm was introduced, which is essentially based by the introduction in the Einstein gravitational field equations of the cosmological constant $\Lambda$. The cosmological constant was introduced in general relativity by Einstein in 1917 \cite{Ein}, in order to construct a static cosmological model. The $\Lambda$CDM model gives and excellent fit to the observational data, especially at low redshifts. However, its  theoretical basis is questionable, since there are no convincing answers to the many problems raised by physical nature and interpretation of $\Lambda$.

Therefore, to obtain a physically and mathematically acceptable description of the Universe, one must go beyond standard general relativity. In the present paper we have adopted the dark gravity approach, in which it is assumed that the nature of the gravitational interaction drastically changes on astrophysical (galactic) and cosmological scales. Therefore, the standard general relativistic Einstein equations, which offer a very precise description of the physics at the level of the Solar System, must be change by a new theory of gravity.  In the present paper we have assumed that such a theory is represented by the $f(R,T)$ theory, which implies a geometry-matter coupling \cite{H1}. Such an approach leads to gravitational models more complicated than standard general relativity.  Cosmological models built in the framework of $f(R,T)$ gravity represent an interesting possibility of explaining dark energy, the accelerating expansion of the Universe, and perhaps even dark matter. However, this type of theory also raises a number of extremely difficult  mathematical problems, and the present paper represents a systematic approach for investigating some of the properties of the theory, as well as proposing a rigorous mathematical algorithm for obtaining cosmological solutions.

From a cosmological, as well as a physical point of view, the relevance of the
obtained results is twofold. Firstly, it gives the possibility of obtaining
the functional forms of $f(R,T)$ that admit cosmological power law solutions
of the form $a(t)=t^{a_{1}}$, which are relevant for several phases of the
evolution of the Universe. The deceleration parameter corresponding to this
solution is given by $q=1/a_{1}-1$, and it can describe accelerating
cosmological models for $a_{1}>1$, decelerating evolution for $a_{1}<1$, and
marginally accelerating solutions for $a_{1}=1$. Hence, one could obtain
severe constraints for the allowed functional form of the $f(R,T)$ function by
requiring the presence in the theory of this power law form of the scale
factor. Generally, for this form of the scale factor, the geometrical (Ricci
scalar), and physical (energy density) parameters have a simple time
dependence, being inversely proportional to the square of the time. This
indicates that the considered self-similar solutions do present a singular
behavior at $t=0$. The effective gravitational coupling has a power law
dependence on time, and, for the $f(R,T)=f_{1}(R)+f_{2}(T)$ it increases as
$G_{eff}\propto t^{2}$.

The second implication of the self-similar approach is related to the
possibility of determining the integration constants that appear in the
theoretical models. Usually, these constants are determined from the initial
conditions, which are not, or poorly known. The present approach allows an
independent determination of the free parameters of the different cosmological
models, leading to the possibility of the direct confrontation of the
theoretical results with observations.

The self-similar approach for the investigation of the gravitational theories
represents a powerful approach that can give important information on the
mathematical and physical structures of the particular models. In the present
study we have performed such an analysis for the $f(R,T)$ modified gravity
theory, with the results of our analysis indicating that simple power law
cosmological models can be obtained in the framework of this modified gravity
theory. The cosmological implications of the present results will be fully
investigated in a future work.

\section*{Acknowledgement}

We would like to thank the anonymous reviewer for comments and suggestions
that helped us to significantly improve our work. J.A.B. is supported by
COMISI\'{O}N NACIONAL DE CIENCIAS Y TECNOLOG\'{I}A through FONDECYT Grant No 11170083.

\appendix

\section{Alternative approach}

For the model $f(R,T)=f_{1}(R)+f_{2}(T)$ FE read (for a flat FLRW metric, and
$T_{\mu\nu}$ a perfect fluid)
\begin{align}
3f_{1R}H^{2}  &  =\left(  1+f_{2T}\right)  \rho-f_{2T}p-\frac{1}{2}%
f_{2}-3f_{1R}^{\prime}H+\frac{1}{2}\left(  f_{1R}R-f_{1}\right)  ,\\
f_{1R}\left(  2H^{\prime}+3H^{2}\right)   &  =-\left(  1+2f_{2T}\right)
p-\frac{1}{2}f_{2}-f_{1R}^{\prime\prime}-2f_{1R}^{\prime}H+\frac{1}{2}\left(
f_{1R}R-f_{1}\right)  ,
\end{align}%
\begin{equation}
\left[  1+\frac{1}{2}\left(  3-5\omega\right)  f_{2T}\right]  \rho^{\prime
}+\left[  3(1+f_{2T})\left(  \omega+1\right)  H+\left(  1-\omega\right)
f_{2T}^{\prime}\right]  \rho=0.
\end{equation}

If one fixes the geometrical background as
\begin{equation}
a=e^{H_{0}t},
\end{equation}
this implies that
\begin{equation}
H=H_{0}=const.
\end{equation}
and therefore
\begin{equation}
R=R_{0}=const.
\end{equation}

Hence, in a naive approach (dimensional analysis point of view),
$f_{1}(R)=const.$ and the first of the FE give
\begin{align}
3f_{1R}H^{2}  &  =\left(  1+f_{2T}\right)  \rho-f_{2T}p-\frac{1}{2}%
f_{2}-3f_{1R}^{\prime}H+\frac{1}{2}\left(  f_{1R}R-f_{1}\right) \\
K_{1}  &  =\left(  1+f_{2T}\right)  \rho-f_{2T}p-\frac{1}{2}f_{2}-K_{2}+K_{3},
\end{align}
where $K_{i}$ are constants. To keep the dimensional homogeneity of the
equation, each of the remaining terms must be constant as well, that is,%
\begin{align}
\rho &  =\rho_{0},\Longrightarrow p=p_{0}\\
f_{2T}\rho &  =f_{2T}p=C_{1},\\
\frac{1}{2}f_{2}  &  =C_{2}.
\end{align}

However, $\rho=\rho_{0}$ is not a realistic physical result today. Note that
if one fix $f_{1}=f_{2}=1,$ then the FE are reduced to the standard FLRW FE.

To formalize this approach, we may to develop the following strategy. We
consider the following vector field (VF)%
\begin{equation}
X=K\partial_{t}-x\partial_{x}-y\partial_{y}-z\partial_{z}\in\mathfrak{X}%
(M),\qquad K\in\mathbb{R},
\end{equation}
then (for a FLRW metric) the condition%
\begin{equation}
L_{X}g_{\mu\nu}=0,
\end{equation}
yields%
\begin{equation}
Ka^{\prime}-a=0\qquad\Longleftrightarrow\qquad a=a_{0}e^{\frac{1}{K}t},
\end{equation}
that is, $a=\exp(H_{0}t).$

In the same way, if we calculate $L_{X}T_{\mu\nu}=0,$ it yields%
\begin{align}
\rho^{\prime}K  &  =0,\\
2Kpa^{\prime}+Kfp^{\prime}-2pf  &  =0,
\end{align}
that is%
\begin{align}
\rho &  =const.\\
2K\frac{a^{\prime}}{a}+K\frac{p^{\prime}}{p}  &  =2.
\end{align}
But, if we take into account that $a=a_{0}e^{\frac{1}{K}t}$, then%
\begin{equation}
2+K\frac{p^{\prime}}{p}=2,
\end{equation}
and therefore%
\begin{equation}
p^{\prime}K=0.
\end{equation}
Thus, $\rho=const=p$ and $f_{2}=const.$ as $f_{1}.$

\end{document}